\documentclass[aps,pra,twocolumn,superscriptaddress,floatfix,nofootinbib,showpacs,longbibliography]{revtex4-1}
\usepackage{lipsum, babel}
\usepackage{amsmath}
\usepackage{float}
\usepackage{accents}
\usepackage{amsmath}
\usepackage[utf8]{inputenc}  
\usepackage[T1]{fontenc}     
\usepackage[sc,osf]{mathpazo}\linespread{1.05}  
\usepackage[scaled=0.86]{berasans}  
\usepackage[colorlinks=true, citecolor=blue, urlcolor=blue]{hyperref}
\makeatletter
\newcommand{\setword}[2]{%
  \phantomsection
  #1\def\@currentlabel{\unexpanded{#1}}\label{#2}%
}
\makeatother
\usepackage{graphicx} 
\usepackage{amsmath}
\usepackage{comment}
\usepackage{nicematrix,tikz}
\newcommand{\tens}[1]{%
  \mathbin{\mathop{\otimes}\limits_{#1}}%
}
\usepackage{nicematrix,tikz}
\usepackage[babel]{microtype}  
\usepackage{amsmath,amssymb,amsthm,bm,amsfonts,mathrsfs,bbm} 

\usepackage{xspace}  
\usepackage{pgf,tikz}
\usepackage{xcolor}
\usepackage{multirow}
\usepackage{array}
\usepackage{bigstrut}
\usepackage{braket}
\usepackage{color}
\usepackage{natbib}
\usepackage{multirow}
\usepackage{mathtools}
\usepackage{float}
\usepackage[caption = false]{subfig}
\usepackage{xcolor,colortbl}
\usepackage{color}
\usepackage{diagbox}
\usepackage{enumitem}
\setitemize{wide}

\newcommand{\be}{\begin{equation}}
\newcommand{\ee}{\end{equation}}
\newcommand{\ba}{\begin{eqnarray}}
\newcommand{\ea}{\end{eqnarray}}
\newcommand{\ketbra}[2]{|#1\rangle \langle #2|}
\newcommand{\tr}{\operatorname{Tr}}

\newtheorem{theorem}{Theorem}
\newtheorem{corollary}{Corollary}
\newtheorem{definition}{Definition}
\newtheorem{proposition}{Proposition}






\def\>{\rangle}
\def\<{\langle}







\usepackage{centernot}
\usepackage{subfig}

\begin{document}

\title{Timelike correlations and quantum tensor product structure}

\author{Samrat Sen}
\affiliation{Department of Physics of Complex Systems, S.N. Bose National Center for Basic Sciences, Block JD, Sector III, Salt Lake, Kolkata 700106, India.}

\author{Edwin Peter Lobo}
\affiliation{School of Physics, IISER Thiruvananthapuram, Vithura, Kerala 695551, India.}

\author{Ram Krishna Patra}
\affiliation{Department of Physics of Complex Systems, S.N. Bose National Center for Basic Sciences, Block JD, Sector III, Salt Lake, Kolkata 700106, India.}

\author{Sahil Gopalkrishna Naik}
\affiliation{Department of Physics of Complex Systems, S.N. Bose National Center for Basic Sciences, Block JD, Sector III, Salt Lake, Kolkata 700106, India.}

\author{Anandamay Das Bhowmik}
\affiliation{Physics and Applied Mathematics Unit, Indian Statistical Institute, 203 BT Road, Kolkata, India.}

\author{Mir Alimuddin}
\affiliation{Department of Physics of Complex Systems, S.N. Bose National Center for Basic Sciences, Block JD, Sector III, Salt Lake, Kolkata 700106, India.}

\author{Manik Banik}
\affiliation{Department of Physics of Complex Systems, S.N. Bose National Center for Basic Sciences, Block JD, Sector III, Salt Lake, Kolkata 700106, India.}

\begin{abstract}
The state space structure for a composite quantum system is postulated among several mathematically consistent possibilities that are compatible with local quantum description. For instance, unentangled Gleason's theorem allows a state space that includes density operators as a proper subset among all possible composite states. However, bipartite correlations obtained in Bell type experiments from this broader state space are in-fact quantum simulable, 
and hence such spacelike correlations are no good to make distinction among different compositions. In this work we analyze communication utilities of these different composite models and show that they can lead to distinct utilities in a simple communication game involving two players. Our analysis, thus, establishes that beyond quantum composite structure can lead to beyond quantum correlations in timelike scenario and hence welcomes new principles to isolate the quantum correlations from the beyond quantum ones. We also prove a no-go that the classical information carrying capacity of different such compositions cannot be more than the corresponding quantum composite systems.   
\end{abstract}

\maketitle	

\section{Introduction}
The tensor product postulate of quantum mechanics, also cited as the `zeroth’ axiom in literature \cite{Zurek09}, describes the Hilbert space of a composite system to be the tensor product of the components’ Hilbert spaces \cite{Neumann55,Dirac66,Jauch68}. A recent study, however, logically derives this postulate from the state postulate and the measurement postulate rather than taking it as an independent one \cite{Carcassi21}. Nevertheless, within this tensor product structure, the unentangled Gleason's theorem assigns state spaces for the composite systems that include density operators (the quantum states) as a proper subset \cite{Klay87,Wallach00,Barnum05}. In fact, assuming individual systems' description to be quantum, several mathematical models are possible for the composite state and effect spaces that yield consistent outcome probability. The framework of generalized probability theory (GPT) \cite{Hardy01,Barrett07,Chiribella11,Barnum11,Masanes11,Plavala21} is well-suited to study these different composite models. Physical constraints, such as no-signaling and local tomography, limit the composite state spaces to be constrained within two extremes \cite{Namioka69,Barker76,Barker81,Auburn20}-- the minimal tensor product composition containing only separable states and the maximal tensor product composition containing beyond quantum states that are positive on product tests (POPT) and compatible with unentangled Gleason's theorem. The corresponding effect spaces are specified in accordance with the `no-restriction’ hypothesis \cite{Chiribella10} that includes all the mathematically consistent effects in the theory. 

A natural question is whether these different composite models can lead to stronger than quantum correlations that in turn will make them distinct from quantum state space and will put an embargo on their physical existence. For the bipartite case a negative response comes through the work of Barnum {\it et al.} \cite{Barnum10} which states that no such composition can produce any beyond quantum spacelike correlations in Bell type experimental scenario. While maximal composition involving more than two subsystems can yield stronger than quantum correlation in typical Bell scenario \cite{Acin10}, recently in Ref.\cite{Lobo21} it has been shown that even in bipartite case stronger than quantum correlations are possible if the typical classical input-classical output Bell scenario is generalized to quantum input-classical output  semi-quantum scenario \cite{Buscemi12}. Although this quantum input scenario disallows all the compositions having beyond entangled states it requires trustworthy verifiers in producing some predetermined unentangled quantum inputs \cite{Lobo21}. In a completely different approach, recently the authors in \cite{Naik22} have shown that the bipartite minimal composition can yield stronger than quantum correlation if timelike scenario is considered. More particularly, it has been shown that a communication game played between two timelike separated players - a sender, and a receiver at the sender's causal future -- cannot be won perfectly by communicating two elementary quantum systems (qubits) if the composite state space is considered to be the standard quantum one, whereas the game becomes perfectly winnable if the composition is assumed to be the minimal one. Although the minimal composition consisting of separable states only cannot produce any nonlocal correlation in Bell like scenario, existence of beyond quantum effects in this theory results in beyond quantum correlations in timelike scenario. This result may lead to an impression that beyond quantum effects are necessary to obtain beyond quantum timelike correlations. In this work we, however, show that such an intuition is, in-fact, not true. More particularly, we show that maximal composition that allows only product effects but permits beyond quantum states can also yield beyond quantum correlations in timelike scenario. Thus, while Barnum {\it et al.} result \cite{Barnum10} shows that spacelike correlations are no good to establish the beyond quantum nature of the bipartite maximal composition, our result establishes that timelike correlations do serve the purpose here. We then proceed to prove a no-go that although the maximal composition allows beyond quantum timelike correlations, the classical information carrying capacity of such models cannot be more than the corresponding quantum composite systems. In fact, we prove a generic result regarding the information capacity of composite systems in the GPT framework.  

\section{Preliminaries}
\subsection{Framework of GPT}
We start by briefly recalling the framework of GPT. For a detailed overview of this framework we refer to the works \cite{Hardy01,Barrett07,Chiribella11,Barnum11,Masanes11,Plavala21}. In the recent past several interesting results have been reported within this framework \cite{Muller2012,Banik2015,Banik2019,Bhattacharya2020,Saha2020,Saha2021,Mayalakshmi2022}. A GPT is specified by a list of system types and the composition rules specifying combination of several systems, where a system $S$ is specified by identifying the three-tuple $\left(\Omega_S,\mathcal{E}_S,\mathcal{T}_S\right)$ of the state space, effect space, and the set of transformations. In a prepare and measure scenario, which will be considered in this work, it is sufficient to describe $\Omega_S$ and $\mathcal{E}_S$ only.  

{\it State space [$\Omega_S$]:} A state $\omega_S$ for a system $S$ is a mathematical object that yields outcome probabilities for all the measurements that can possibly be carried out on the system. Collection of all allowed states form the state space $\Omega_S$, and generally it is considered to be a compact-convex set embedded in some real vector space $V$. Convexity assures the fact that if $\omega_1$ and $\omega_2$ are allowed states then their classical mixture $p\omega_1+(1-p)\omega_2$ is also a valid state. On the other hand, compactness assures that there is no physical distinction between states that can be prepared exactly, and states that can be prepared to arbitrary accuracy \cite{Krumm17}. The extreme points of the set $\Omega_S$ are called pure states or states of maximal knowledge. 

{\it Effect space [$\mathcal{E}_S$]:} An effect $e$ is a linear functional acting on $V$ such that $e:\Omega_S\to[0,1]$. The unit effect is defined by $u(\omega)=1,~\forall~\omega\in\Omega_S$. The set of all proper effects $\mathcal{E}_S\equiv\{e~|~0\le e(\omega)\le1,~\forall~\omega\in\Omega_S\}$ is the convex hull of zero effect, unit effect and the extremal effects and embedded in the vector space $V^\star$ dual to $V$. A measurement $\mathcal{M}$ is a collection of effects that sum to the unit effect, {\it i.e.} $\mathcal{M}\equiv\{e_i\in\mathcal{E}_S~|~\sum_ie_i=u\}$. 

{\it State and effect cones:} Sometime it is mathematically convenient to work with the notion of unnormalized states and effects. The set of unnormalized states $V_+\subset V$ is the conical hull of $\Omega_s$, {\it i.e.}, $r\omega\in V_+$ for $r\ge0$ and $\omega\in\Omega_S$. The set of unnormalized effects is its dual cone $V_+^\star\subset V^\star$, {\it i.e.},  $V_+^\star\equiv\left\{e~|~e(\omega)\ge0,~\forall~\omega\in V_+\right\}$. The formulation generally assumes the `no-restriction hypothesis' which demands that the state and effect cones are dual to each other \cite{Chiribella11}.

{\it Composite system:} Given two systems with state spaces $\Omega_A\subset V_A$ and $\Omega_B\subset V_B$, the state space $\Omega_{AB}$ for the composite systems is embedded in the vector space $V_{AB}$ which is the tensor product of the component vector spaces, {\it i.e.} $V_{AB}=V_A\otimes V_B$ \cite{Carcassi21}. Although the choice of $\Omega_{AB}$ is not unique, the no signaling principle and tomographic locality postulate \cite{Hardy01} bound the choices within two extremes -- the minimal tensor product space and maximal tensor product space \cite{Namioka69}. More formally,
\begin{align*}
\Omega_{AB}^{\min}&\equiv\{\omega_{AB}=\sum_ip_i\omega_A^i\otimes\omega_B^i~|~\omega_A^i\in\Omega_A,\\
&~~~~~~~\omega_B^i\in\Omega_B;~p_i\ge0~\&~\sum_i p_i=1\},\\
\Omega_{AB}^{\max}&\equiv\left\{\omega_{AB}~\in V_{AB}~|~1\ge~e_A\otimes e_B(\omega_{AB})\ge0,~~~~~~~~~~~~~~~~\right.\\
&\left.~~~~~~~~~~~~~~~~~~~~~~\forall~e_A\in\mathcal{E}_A~\&~e_B\in\mathcal{E}_B\right\}.
\end{align*}
It is not hard to see that the cone $(V^{\min}_{AB})_+$ is isomorphic to the dual cone $(V^{\max}_{AB})_+$. Therefore, in accordance with the no-restriction hypothesis for the case of minimal composition, the effect cone $(V^{\min}_{AB})^\star_+ \cong (V^{\max}_{AB})_+$, and for the case of maximal composition, the effect cone $(V^{\max}_{AB})^\star_+ \cong (V^{\min}_{AB})_+$. The symbol $\cong$ denotes isomorphism.

\subsection{Quantum theory: a GPT} 
Quantum theory can be seen as a special instance of a GPT. State space of a $d$-level quantum system associated with complex Euclidean space $\mathbb{C}^d$ is the set of density operators acting on $\mathbb{C}^d$, {\it i.e.}, $\Omega(\mathbb{C}^d)\equiv\mathcal{D}(\mathbb{C}^d)$. The set $\mathcal{D}(\mathbb{C}^d)$ is a convex compact set embedded in $\mathbb{R}^{d^2-1}$. The unnormalized state cone is the set of all non-negative operators $\mathcal{P}(\mathbb{C}^d):=\{\lambda\rho~|~\lambda\ge0~\&~\rho\in\mathcal{D}(\mathbb{C}^d)\}$, which is also the unnormalized effect cone. In other words, quantum theory is self dual. The minimal composition of two quantum systems associated with Hilbert spaces $\mathbb{C}^{d_A}$ and $\mathbb{C}^{d_B}$ allows only separable state we call it as SEP composition and denote the resulting system as the triplet $S^{AB}_{\text{SEP}}\equiv\left[\mathbb{C}^{d_A},\mathbb{C}^{d_B},\bigotimes_{\text{SEP}}\right]$. Formally the state space for the SEP composition is given by
\begin{align*}
\Omega_{\text{SEP}}(\mathbb{C}^{d_A},\mathbb{C}^{d_B}):=\left\{\rho_{AB}=\sum_ip_i\rho_A^i\otimes\rho^i_B~|~p_i\ge0\right.\\
\left.\&~\sum_ip_i=1;~\rho^i_A\in\mathcal{D}(\mathbb{C}^{d_A}),~\rho^i_B\in\mathcal{D}(\mathbb{C}^{d_B})\right\}.
\end{align*}
Since $\Omega_{\text{SEP}}(\mathbb{C}^{d_A},\mathbb{C}^{d_B})$ contains only separable states, the corresponding effect space $\mathcal{E}_{\text{SEP}}(\mathbb{C}^{d_A},\mathbb{C}^{d_B})$ contains effects that are not allowed in quantum theory. Entanglement witness operators yielding positive probability on separable states are valid effects in this composition although they are not allowed in quantum theory. On the other extreme, the maximal composition, which we will call $\overline{\text{SEP}}$, and the resulting system denote as $S^{AB}_{\overline{\text{SEP}}}\equiv\left[\mathbb{C}^{d_A},\mathbb{C}^{d_B},\bigotimes_{\overline{\text{SEP}}}\right]$, has the state space
\begin{align*}
\Omega_{\overline{\text{SEP}}}(\mathbb{C}^{d_A},\mathbb{C}^{d_B}):=\left\{W_{AB}\in\text{Herm}(\mathbb{C}^{d_A}\otimes\mathbb{C}^{d_B})~|~\right.\\
\left.\tr(W_{AB})=1, \tr[W_{AB}(\pi_A\otimes\pi_B)]\ge0~ \right.\\
\left.\forall~\pi_A\in\mathcal{P}(\mathbb{C}^{d_A}),\pi_B\in\mathcal{P}(\mathbb{C}^{d_B})\right\}.
\end{align*}
Here $\text{Herm}(\mathcal{X})$ denotes the set of Hermitian operators acting on the space $\mathcal{X}$, and normalization demands $\tr(W_{AB})=1~\forall~W_{AB}\in\Omega_{\overline{\text{SEP}}}(\mathbb{C}^{d_A},\mathbb{C}^{d_B})$. The unnormalized effect cone corresponding to $\mathcal{E}_{\overline{\text{SEP}}}(\mathbb{C}^{d_A},\mathbb{C}^{d_B})$ is identical to the unnormalized state cone corresponding to the set $\Omega_{\text{SEP}}(\mathbb{C}^{d_A},\mathbb{C}^{d_B})$. For the quantum case $S^{AB}_{Q}\equiv\left[\mathbb{C}^{d_A},\mathbb{C}^{d_B},\bigotimes_{Q}\right]$ we have $\Omega_{Q}(\mathbb{C}^{d_A},\mathbb{C}^{d_B})=\mathcal{D}(\mathbb{C}^{d_A}\otimes\mathbb{C}^{d_B})$, and the effect cone is identical to the state cone which represents the self duality of quantum theory. The following set inclusion relations are immediate
\begin{align*}
\Omega_{\text{SEP}}(\mathbb{C}^{d_A},\mathbb{C}^{d_B})\subset\Omega_{Q}(\mathbb{C}^{d_A},\mathbb{C}^{d_B})\subset\Omega_{\overline{\text{SEP}}}(\mathbb{C}^{d_A},\mathbb{C}^{d_B}),\\ 
\mathcal{E}_{\overline{\text{SEP}}}(\mathbb{C}^{d_A},\mathbb{C}^{d_B})\subset\mathcal{E}_{Q}(\mathbb{C}^{d_A},\mathbb{C}^{d_B})\subset\mathcal{E}_{\text{SEP}}(\mathbb{C}^{d_A},\mathbb{C}^{d_B}).
\end{align*}
In between SEP and $\overline{\text{SEP}}$, many other compositions can be defined by appending/deducting suitable states/effects. Among these, quantum composition is the only one that is self dual.    

\subsection{Operational notions of dimension}
The dimension of the vector space $V$ in which the set $\Omega_S$ is embedded is a well defined concept, but it does not carry any operational signature. However, operationally motivated notion of dimension can be defined through the concept of state distinguishability. For the purpose of our work, in the following, we recall few relevant definitions \cite{Brunner14}. 

\begin{definition}[Perfect distinguishability]
Two states $\omega_1,\omega_2\in\Omega_S$ are perfectly distinguishable whenever there exists some measurement $\mathcal{M}=\{e_1,e_2\in\mathcal{E}_S~|~e_1+e_2=u\}$ such that $e_i(\omega_j)=\delta_{ij}$.
\end{definition}
For instance, two quantum states $\ket{\psi},\ket{\phi}\in\mathbb{C}^d$ are perfectly distinguishable if and only if they are orthogonal, a fact which follows from the seminal no-cloning theorem \cite{Wootters82}. On the other hand, in discrete classical probability theory the state spaces are simplices and any two extreme points are perfectly distinguishable \cite{Wilce21}.
\begin{definition}[Operational Dimension]
Operational dimension $\mathbb{O}(S)$ of a system $S$ is the maximum cardinality of the set of states $\Omega_n:=\{\omega_1,\cdots,\omega_n\}\subset\Omega_S$ such that all the states in $\Omega_n$ are perfectly distinguishable in a single measurement.
\end{definition}
For instance $\mathbb{O}(\mathbb{C}^d)=d$ although the the dimension of the vector space in which $\mathcal{D}(\mathbb{C}^d)$ is embedded is $d^2-1$. Operational dimension of a system quantifies its classical information carrying capacity \cite{Hardy01,Wootters86} (see also \cite{Muller12,Arai19}), {\it i.e.} sending a system with operational dimension $\mathbb{O}(S)$ through a noiseless channel a sender can sends $\log_2\mathbb{O}(S)$-bits of classical information to a receiver. 
\begin{definition}[Information Dimension]
The information dimension $\mathbb{I}(S)$ of a system $S$ is the maximum cardinality of the set of states $\Omega_n:=\{\omega_1,\cdots,\omega_n\}\subset\Omega_S$ such that all the states in $\Omega_n$ are pairwise perfectly distinguishable.
\end{definition}
Note that while defining $\mathbb{O}(S)$ a single measurement is allowed to distinguish the states in the set $\Omega_n$. On the other hand, $\mathbb{I}(S)$ deals with the pairwise distinguishability and for different pairs of states $\{\omega_i,\omega_j\}$ in $\Omega_n$, different measurements $\mathcal{M}_{ij}$ can be performed to distinguish the pairs. Therefore, it clearly follows that $\mathbb{I}(\star)\ge\mathbb{O}(\star)$ for an arbitrary GPT system, and accordingly one can define a quantity called dimension mismatch, $\Delta(\star):=\mathbb{I}(\star)-\mathbb{O}(\star)$. For classical and quantum systems it follows from simple arguments that both theses dimensions are equal. However, as shown in \cite{Brunner14}, for the hypothetical toy model of Box world ($\Box$) the information dimension is strictly greater than the operational dimension. While $\mathbb{I}(\Box)=4$, one has that $\mathbb{O}(\Box)=2.$ 

\section{Results}
As already mentioned, the state space of maximal composition strictly contains the quantum state space, {\it i.e.}, $\mathcal{D}(\mathbb{C}^{d_A}\otimes\mathbb{C}^{d_B})\subset\Omega_{\overline{\text{SEP}}}(\mathbb{C}^{d_A},\mathbb{C}^{d_B})$. In particular, an entanglement witness operator $W\notin\mathcal{D}(\mathbb{C}^{d_A}\otimes\mathbb{C}^{d_B})$, whereas $W\in\Omega_{\overline{\text{SEP}}}(\mathbb{C}^{d_A},\mathbb{C}^{d_B})$. Although the state space of $\overline{\text{SEP}}$ theory is bigger than the quantum state space, the `nonlocal strength' of the bipartite system $S^{AB}_{\overline{\text{SEP}}}$ is no more than  $S^{AB}_{Q}$. This follows from a generic result by Barnum {\it et al.} \cite{Barnum10}, where it is proved that any no-signaling bipartite input-output probability distribution $P(ab|xy)$ obtained from $S^{AB}_{\overline{\text{SEP}}}$ can also be obtained from $S^{AB}_{Q}$; here $a$ and $b$ denote Alice's and Bob's output corresponding to their respective inputs $x$ and $y$. For a state $W\in\Omega_{\overline{\text{SEP}}}(\mathbb{C}^{d_A},\mathbb{C}^{d_B})$ the correlation $P(ab|xy)$ is obtained as
\begin{align*}
P(ab|xy)&=\tr[W(\pi^a_x\otimes\pi^b_y)],\\
\pi^a_x\in\mathcal{P}(\mathbb{C}^{d_A}),~\sum_a\pi^a_x=\mathbf{1}_{d_A}~&\&~\pi^b_y\in\mathcal{P}(\mathbb{C}^{d_B}),~\sum_b\pi^b_y=\mathbf{1}_{d_B}.
\end{align*}
As pointed out in \cite{Acin10}, the result of Barnum {\it et al.} can be seen as follows. According to Choi–Jamio\l{}kowski (CJ) isomorphism \cite{Choi75,Jamiolkowski72}, any $W\in\Omega_{\overline{\text{SEP}}}(\mathbb{C}^{d_A},\mathbb{C}^{d_B})\setminus\mathcal{D}(\mathbb{C}^{d_A}\otimes\mathbb{C}^{d_B})$ can be written as $[\mathcal{I}\otimes\Lambda](\phi^+)$, where $\Lambda$ is a positive map, $\mathcal{I}$ is the identity map, and $\phi^+$ is the projector on maximally entangled state. Furthermore, any such witness can also be written as $[\mathcal{I}\otimes\Lambda_{tp}](\psi)$, where $\Lambda_{tp}$ is positive and trace-preserving and $\psi$ is a projector onto a pure bipartite state \cite{Horodecki06}. Therefore, we have
\begin{align*}
P(ab|xy)&=\tr[W(\pi^a_x\otimes\pi^b_y)]=\tr[[\mathcal{I}\otimes\Lambda_{tp}](\psi)(\pi^a_x\otimes\pi^b_y)]\\
&=\tr[\psi(\pi^a_x\otimes\Lambda_{tp}^\star[\pi^b_y])]=\tr[\psi(\pi^a_x\otimes\tilde{\pi}^b_y)].
\end{align*}
Here $\Lambda^\star$ is the adjoint map of $\Lambda$, and since the adjoint of a positive trace-preserving map is positive and unital, $\{\tilde{\pi}^b_y:=\Lambda_{tp}^\star[\pi^b_y]\}_b$ forms a valid quantum measurement. 

We will now proceed to show that the system $S^{AB}_{\overline{\text{SEP}}}$ can yield stronger that quantum correlation in the timelike domain. We will establish this with the help of a communication game introduced in \cite{Naik22} which we briefly recall below.    

{\it Pairwise distinguishability game $\mathcal{P}_D^{[n]}$}:-- The game involves two players (Alice and Bob) and a Referee. In each run of the game, the Referee provides a classical message $\eta$ to Alice, randomly chosen from some set of messages $\mathcal{N}$, where $|\mathcal{N}|:=n$. In the same run Bob is asked a question $\mathbb{Q}(\eta,\eta^\prime)$ -- whether the message given to Alice is $\eta$ or $\eta^\prime$, where $\eta^\prime\neq\eta$. The winning condition demands Bob answer all questions correctly. Alice can help Bob by sending some information about the message she received. It is not hard to see that perfect winning demands Alice to encode the message on the states of some physical system that are pairwise distinguishable. With this game we are now in a position to prove one of our main results. 
\begin{table}[h!]
\begin{tabular}{ |p{ 1.4cm}||p{ 1.4cm}|p{ 1.4cm}|p{ 1.4cm}|p{ 1.4cm}| }
\hline
&\diagbox[innerwidth=1.4cm, dir=SW,innerleftsep=.0cm,innerrightsep=.3cm]{\\~~ $ \Phi^+ $}{~\\$\overline{\Phi^+ }$}\cellcolor{green!15}&
\diagbox[innerwidth=1.4cm, dir=SW,innerleftsep=.0cm,innerrightsep=.3cm]{ \\~~ $ \Phi^- $}{~\\$\overline{\Phi^- }$}\cellcolor{green!15}&
\diagbox[innerwidth=1.4cm, dir=SW,innerleftsep=.0cm,innerrightsep=.3cm]{ \\~~ $ \Psi^+ $}{~\\ $\overline{\Psi^+ }$}\cellcolor{green!15}&
\diagbox[innerwidth=1.4cm, dir=SW,innerleftsep=.0cm,innerrightsep=.3cm]{ \\~~ $ \Psi^- $}{\\~$\overline{\Psi^- }$}\cellcolor{green!15}
\\
\hline\hline\hline
\diagbox[innerwidth=1.4cm, dir=SW,innerleftsep=.0cm,innerrightsep=.3cm]{\\~~ $ \Phi^+ $}{~\\$\overline{\Phi^+ }$}\cellcolor{green!15}&
$~~~~~\mathbf{NA}$ &$~A^y\otimes A^y$ &$~~~~\mathbf{1}\otimes \mathbf{1}$  &$~~~~\mathbf{1}\otimes \mathbf{1}$ 
\\
\hline
\diagbox[innerwidth=1.4cm, dir=SW,innerleftsep=.0cm,innerrightsep=.3cm]{ \\~~ $ \Phi^- $}{~\\$\overline{\Phi^- }$}\cellcolor{green!15}&$~A^y\otimes A^y$  &$~~~~~\mathbf{NA}$&$~~~~\mathbf{1}\otimes \mathbf{1}$ &$~~~~\mathbf{1}\otimes \mathbf{1}$ \\
\hline
\diagbox[innerwidth=1.4cm, dir=SW,innerleftsep=.0cm,innerrightsep=.3cm]{ \\~~ $ \Psi^+ $}{~\\ $\overline{\Psi^+ }$}\cellcolor{green!15}&$~~~~\mathbf{1}\otimes \mathbf{1}$  &$~~~~\mathbf{1}\otimes \mathbf{1}$&$~~~~~\mathbf{NA}$ &$~A^y\otimes A^y$\\
\hline
\diagbox[innerwidth=1.4cm, dir=SW,innerleftsep=.0cm,innerrightsep=.3cm]{ \\~~ $ \Psi^- $}{\\~$\overline{\Psi^- }$}\cellcolor{green!15}&$~~~~\mathbf{1}\otimes \mathbf{1}$  &$~~~~\mathbf{1}\otimes \mathbf{1}$&$~A^y\otimes A^y$&$~~~~~\mathbf{NA}$\\
\hline\hline\hline
\diagbox[innerwidth=1.4cm, dir=SW,innerleftsep=.0cm,innerrightsep=.3cm]{ \\~~ $\overline{\Phi^+ }$}{~ \\ ~} \cellcolor{brown!25}&$~A^x\otimes A^x$  &$~A^y\otimes A^y$&$~~~~\mathbf{1}\otimes \mathbf{1}$ &$~~~~\mathbf{1}\otimes \mathbf{1}$\\
\hline
\diagbox[innerwidth=1.4cm, dir=SW,innerleftsep=.0cm,innerrightsep=.3cm]{ \\~~ $\overline{\Phi^- }$}{~ \\ ~}
\cellcolor{brown!25}&$~A^y\otimes A^y$  &$~A^x\otimes A^x$ &$~~~~\mathbf{1}\otimes \mathbf{1}$ &$~~~~\mathbf{1}\otimes \mathbf{1}$\\
\hline
\diagbox[innerwidth=1.4cm, dir=SW,innerleftsep=.0cm,innerrightsep=.3cm]{ \\~~ $\overline{\Psi^+ }$}{~ \\ ~}\cellcolor{brown!25}&$~~~~\mathbf{1}\otimes \mathbf{1}$ &$~~~~\mathbf{1}\otimes \mathbf{1}$&$~A^x\otimes A^x$ &$~A^y\otimes A^y$\\
\hline

\diagbox[innerwidth=1.4cm, dir=SW,innerleftsep=.0cm,innerrightsep=.3cm]{ \\~~ $\overline{\Psi^- }$}{~ \\ ~}\cellcolor{brown!25}&$~~~~\mathbf{1}\otimes \mathbf{1}$  &$~~~~\mathbf{1}\otimes \mathbf{1}$&$~A^y\otimes A^y$ &$~A^x\otimes A^x$\\
\hline
\end{tabular}
\caption{\label{unitaries} The unitaries required to construct the measurement $\mathcal{M}[U\otimes V]$ for pairwise distinguishability of the states in $\$[8]$ are given. The states in the horizontal upper (lower) diagonal can be distinguished from the states in the vertical upper (lower) diagonal using the corresponding unitaries. For instance, the measurement to distinguish the pair $\{\Psi^+, \Psi^-\}$ as well as the pair $\{\overline{\Psi^+},\overline{\Psi^-}\}$ is given by the entry in the third row and fourth column, {\it i.e.}, $\mathcal{M}[A^y\otimes A^y]$, whereas the pair $\{\overline{\Psi^+},\Psi^+\}$ is distinguished by the measurement given in seventh row, third column, {\it i.e.}, $\mathcal{M}[A^x\otimes A^x]$. $\mathbf{NA}$ means that a state cannot be distinguished from itself.}
\end{table}
\begin{theorem}\label{theo1}
The game $\mathcal{P}_D^{[8]}$ cannot be won if Alice uses the system $[\mathbb{C}^2,\mathbb{C}^2,\bigotimes_Q]$ to encode her message whereas $[\mathbb{C}^2,\mathbb{C}^2,\bigotimes_{\overline{\text{SEP}}}]$ system yields a perfect winning strategy.  
\end{theorem}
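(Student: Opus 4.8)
The statement has two halves---impossibility with $\bigotimes_Q$ and a perfect strategy with $\bigotimes_{\overline{\text{SEP}}}$---and the plan is to prove them separately. For the impossibility I would first invoke the observation recorded above that a perfect strategy for $\mathcal{P}_D^{[n]}$ forces Alice to encode her $n$ messages into $n$ states of the transmitted system that are \emph{pairwise} perfectly distinguishable (for each query $\mathbb{Q}(\eta,\eta')$ Bob must separate the codewords of $\eta$ and $\eta'$ with certainty), so that a perfect strategy requires $\mathbb{I}(S)\ge n$. For $S=[\mathbb{C}^2,\mathbb{C}^2,\bigotimes_Q]$, whose state space is $\mathcal{D}(\mathbb{C}^4)$ with the full quantum effect set, two states are perfectly distinguishable iff their supports are orthogonal, so a pairwise perfectly distinguishable family has mutually---hence jointly---orthogonal supports and cannot exceed four members in $\mathbb{C}^4$; thus $\mathbb{I}([\mathbb{C}^2,\mathbb{C}^2,\bigotimes_Q])=\mathbb{O}(\mathbb{C}^4)=4<8$, and $\mathcal{P}_D^{[8]}$ cannot be won with this system.

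For the positive part I would exhibit an explicit eight-element code in $\Omega_{\overline{\text{SEP}}}(\mathbb{C}^2,\mathbb{C}^2)$ together with the distinguishing measurements collected in Table~\ref{unitaries}. Writing Hermitian unit-trace operators in the Pauli-correlation form $\frac14\!\left(\mathbf{1}\otimes\mathbf{1}+\epsilon_1\,X\!\otimes\!X+\epsilon_2\,Y\!\otimes\!Y+\epsilon_3\,Z\!\otimes\!Z\right)$, the four Bell projectors $\Phi^\pm,\Psi^\pm$ are precisely those with $\epsilon_1\epsilon_2\epsilon_3=-1$; I define the four ``anti-states'' $\overline{\Phi^\pm},\overline{\Psi^\pm}$ by flipping one of the three signs, so that they have $\epsilon_1\epsilon_2\epsilon_3=+1$ and hence a negative eigenvalue---they are genuine beyond-quantum states. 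The first point to check is that each anti-state $\overline B$ actually lies in $\Omega_{\overline{\text{SEP}}}(\mathbb{C}^2,\mathbb{C}^2)$: its trace is $1$, and for a product operator $\rho_A\otimes\rho_B$ with Bloch vectors $\vec a,\vec b$ of norm at most one, $\tr[\overline B\,(\rho_A\otimes\rho_B)]=\frac14\!\left(1+\epsilon_1 a_1b_1+\epsilon_2 a_2b_2+\epsilon_3 a_3b_3\right)\ge\frac14(1-|\vec a|\,|\vec b|)\ge 0$ by Cauchy--Schwarz, so the positivity-on-products constraint holds.

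Next I would note that the measurements in Table~\ref{unitaries} are all of the form $\mathcal{M}[U\otimes U]$ with $U\in\{\mathbf{1},A^x,A^y\}$---measure in the local product eigenbasis of $Z\!\otimes\!Z$, $X\!\otimes\!X$ or $Y\!\otimes\!Y$---whose effects are product projectors, hence separable operators, hence legitimate effects of $\bigotimes_{\overline{\text{SEP}}}$, since the unnormalised effect cone of the maximal composition equals the separable-operator cone. Pairwise distinguishability then follows from two facts: (i) for each of the eight states and each $P\in\{X,Y,Z\}$ the correlation $\langle P\otimes P\rangle$ equals $+1$ or $-1$; and (ii) any two distinct states among the eight disagree on the sign of at least one such correlation---a parity count shows two states with the same $\epsilon_1\epsilon_2\epsilon_3$ disagree on exactly two of the three, while a Bell/anti pair disagrees on one or three. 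Measuring the correlation on which a given pair disagrees sends the two states onto complementary pairs of outcomes; and because every outcome probability $\tr[P_{ij}\sigma]$ of these measurements is non-negative while the ``wrong-parity'' outcomes sum to zero, the latter occur with probability zero, so the pair is perfectly distinguished. Consequently Alice encodes message $\eta$ into the $\eta$-th codeword and Bob answers any query $\mathbb{Q}(\eta,\eta')$ using the measurement tabulated for that pair, winning $\mathcal{P}_D^{[8]}$ perfectly.

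I expect the bulk of the work---and the only real subtlety---to be in verifying step (ii) exhaustively: that a single local product-basis measurement genuinely separates each of the $\binom{8}{2}=28$ pairs and that the three measurements of Table~\ref{unitaries} suffice for all of them. The parity observation reduces this to the elementary statements that measuring $X\!\otimes\!X$, $Y\!\otimes\!Y$ or $Z\!\otimes\!Z$ partitions the eight states according to the corresponding $\epsilon$, but writing out the anti-states' sign data and matching them against the table entries is where the care is needed. By contrast, the positivity check is the one-line Cauchy--Schwarz bound above, and identifying product projectors as valid $\overline{\text{SEP}}$ effects is immediate from the stated coincidence of the $\bigotimes_{\overline{\text{SEP}}}$ effect cone with the $\bigotimes_{\text{SEP}}$ state cone.
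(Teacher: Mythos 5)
Your proposal is correct and takes essentially the same route as the paper: the impossibility half rests on the two-qubit system having information dimension $4<8$, and the positive half uses the same eight codewords (the Bell states together with their partial transposes, i.e., the eight Pauli-correlation sign patterns) distinguished pairwise by local parity measurements in the $X$, $Y$ and $Z$ product bases, which are exactly the measurements $\mathcal{M}[U\otimes U]$ of Table~\ref{unitaries}. Your sign-pattern parity count is simply a cleaner bookkeeping of the paper's table, and the Cauchy--Schwarz check of POPT membership makes explicit a step the paper leaves implicit.
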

\begin{proof}
Perfect winning of the game $\mathcal{P}_D^{[n]}$ requires Alice to communicate to Bob a physical system which has information dimension at least $n$. For the two-qubit system $[\mathbb{C}^2,\mathbb{C}^2,\bigotimes_Q]$, the information dimension is the same as its operational dimension which is $4$, and therefore, $\mathcal{P}_D^{[8]}$ game cannot be won perfectly by communicating two qubits. 

We now provide an explicit strategy to win the game $\mathcal{P}_D^{[8]}$ using the system $[\mathbb{C}^2,\mathbb{C}^2,\bigotimes_{\overline{\text{SEP}}}]$. Let Alice use the set  $\$[8]\equiv\left\{\Phi^\pm,\Psi^\pm,\overline{\Phi^\pm},\overline{\Psi^\pm}\right\}\subset\Omega_{\overline{\text{SEP}}}(\mathbb{C}^2,\mathbb{C}^2)$ of eight different states to encode her messages; where $\chi:=\ket{\chi}\bra{\chi}$, $\ket{\Phi^\pm}:=(\ket{00}\pm\ket{11})/\sqrt{2}$, $\ket{\Psi^\pm}:=(\ket{01}\pm\ket{10})/\sqrt{2}$, and 
$\overline{\chi}:=\mathcal{I}\otimes\mathrm{T}(\chi)$ with $\mathcal{I}$ denoting the identity map and $\mathrm{T}$ denoting the transposition map (in the computational basis). It remains to be shown that the states in $\$[8]$ are pairwise distinguishable with measurements constituted by the effects from the set $\mathcal{E}_{\overline{\text{SEP}}}(\mathbb{C}^2,\mathbb{C}^2)$. 

Consider the pair of states $\Phi^+$ and $\Psi^+$, and the measurement
\begin{align*}
\mathcal{M}\equiv\begin{cases}
E_{even}:=\ket{0}\bra{0}\otimes\ket{0}\bra{0}+\ket{1}\bra{1}\otimes\ket{1}\bra{1},\\
E_{odd}=\mathcal{I}-E_{even}:=\ket{0}\bra{0}\otimes\ket{1}\bra{1}\\\hspace{2.5cm}+\ket{1}\bra{1}\otimes\ket{0}\bra{0}.
\end{cases}
\end{align*}
Clearly, $\mathcal{M}$ is a valid measurement on the system $[\mathbb{C}^2,\mathbb{C}^2,\bigotimes_{\overline{\text{SEP}}}]$ as $E_{odd}, E_{even}\in\mathcal{E}_{\overline{\text{SEP}}}(\mathbb{C}^2,\mathbb{C}^2)$, where $E_{even}$ is the projector of even number of up spin and $E_{odd}$ is the projector of odd number of up spin. A straightforward calculation yields
\begin{align*}
\tr(\Phi^+E_{odd})=1,~~\tr(\Phi^+E_{even})=0;\\  
\tr(\Psi^+E_{odd})=0,~~\tr(\Psi^+E_{even})=1. 
\end{align*}
Therefore, the measurement $\mathcal{M}$ perfectly distinguishes the states $\Phi^+$ and $\Psi^+$. To show the same for any pair of states in $\$[8]$, let us denote as $\mathcal{M}[U\otimes V]$ the measurement obtained from $\mathcal{M}$ through the unitary rotation $U\otimes V$, {\it i.e.}  $\mathcal{M}[U\otimes V]:=\{U\otimes V E_{odd}U^\dagger\otimes V\dagger,U\otimes V E_{even}U^\dagger\otimes V\dagger\}$. As shown in Table \ref{unitaries}, choosing $U$ and $V$ appropriately from the set $\left\{\mathbf{1}:=\begin{pmatrix} 1 & 0\\0 & 1\\\end{pmatrix},A^x:=\frac{1}{\sqrt{2}}\begin{pmatrix}1 & -i\\-i & 1\\\end{pmatrix},A^y:=\frac{1}{\sqrt{2}}\begin{pmatrix}1 & -1\\1 & 1\\\end{pmatrix}\right\}$ any pair of states in $\$[8]$ can be distinguished perfectly by the measurement $\mathcal{M}[U\otimes V]$ . This completes the proof.
\end{proof}
Theorem \ref{theo1} thus establishes that $\overline{\text{SEP}}$ composition of two elementary qubits can result in a correlation that can't be achieved with two qubits quantum composition. As an immediate corollary we have a lower bound on the information dimension of the system $[\mathbb{C}^2,\mathbb{C}^2,\bigotimes_{\overline{\text{SEP}}}]$.  
\begin{corollary}\label{coro1}
The information dimension of the system $[\mathbb{C}^2,\mathbb{C}^2,\bigotimes_{\overline{\text{SEP}}}]$ is at least $8$, {\it i.e.} $\mathbb{I}[\mathbb{C}^2,\mathbb{C}^2,\bigotimes_{\overline{\text{SEP}}}]\ge8$. 
\end{corollary}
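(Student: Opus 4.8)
The plan is to obtain the bound as an immediate consequence of Theorem~\ref{theo1}. Recall that, by the definition of the information dimension $\mathbb{I}(S)$, this quantity is the largest $n$ admitting a set $\{\omega_1,\dots,\omega_n\}\subset\Omega_S$ whose elements are \emph{pairwise} perfectly distinguishable, where each pair may be resolved by its own measurement. Thus it suffices to exhibit one such set of cardinality $8$ inside $\Omega_{\overline{\text{SEP}}}(\mathbb{C}^2,\mathbb{C}^2)$, and the winning strategy for $\mathcal{P}_D^{[8]}$ constructed in the proof of Theorem~\ref{theo1} already provides it, namely $\$[8]=\{\Phi^\pm,\Psi^\pm,\overline{\Phi^\pm},\overline{\Psi^\pm}\}$.

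Concretely, I would assemble the argument in three short steps. First, note that each element of $\$[8]$ indeed lies in $\Omega_{\overline{\text{SEP}}}(\mathbb{C}^2,\mathbb{C}^2)$: the Bell projectors belong already to $\mathcal{D}(\mathbb{C}^2\otimes\mathbb{C}^2)\subset\Omega_{\overline{\text{SEP}}}(\mathbb{C}^2,\mathbb{C}^2)$, while for a pure $\chi$ one has $\tr(\overline{\chi})=1$ together with $\tr[\overline{\chi}(\pi_A\otimes\pi_B)]=\tr[\chi(\pi_A\otimes\pi_B^{\mathrm{T}})]\ge0$ for all $\pi_A,\pi_B\in\mathcal{P}(\mathbb{C}^2)$, since $\pi_B^{\mathrm{T}}\in\mathcal{P}(\mathbb{C}^2)$. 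Second, check that the eight operators are mutually distinct: writing them out in the computational basis, the four partial transposes are, up to normalisation, distinct ``swap-type'' involutions, none of which coincides with any of the four Bell projectors. Third, invoke the distinguishability part of the proof of Theorem~\ref{theo1}: every one of the $\binom{8}{2}$ pairs drawn from $\$[8]$ is separated perfectly by a measurement $\mathcal{M}[U\otimes V]$ with $U,V\in\{\mathbf{1},A^x,A^y\}$, and each such measurement is legitimate on $[\mathbb{C}^2,\mathbb{C}^2,\bigotimes_{\overline{\text{SEP}}}]$ because its effects are local-unitary rotations of $E_{even},E_{odd}\in\mathcal{E}_{\overline{\text{SEP}}}(\mathbb{C}^2,\mathbb{C}^2)$; the pair-to-measurement dictionary is precisely Table~\ref{unitaries}. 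Chaining these three steps exhibits a set of $8$ pairwise perfectly distinguishable states, whence $\mathbb{I}[\mathbb{C}^2,\mathbb{C}^2,\bigotimes_{\overline{\text{SEP}}}]\ge8$.

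I do not anticipate a genuine obstacle here: the corollary is a direct reading of Theorem~\ref{theo1} through the definition of $\mathbb{I}$. The only point that calls for a line of care is the second step --- certifying that $\$[8]$ really has cardinality $8$ rather than fewer --- and this is settled by a one-line comparison of the operators in the computational basis.
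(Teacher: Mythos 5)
Your proposal is correct and follows essentially the same route as the paper: the corollary is read off from Theorem~\ref{theo1}, whose proof already exhibits the set $\$[8]\subset\Omega_{\overline{\text{SEP}}}(\mathbb{C}^2,\mathbb{C}^2)$ of eight pairwise perfectly distinguishable states via the measurements $\mathcal{M}[U\otimes V]$ of Table~\ref{unitaries}. Your extra checks (membership of the partial transposes in the POPT state space and distinctness of the eight operators) are sound and only make explicit what the paper leaves implicit.
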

At present we do not know whether the above bound is tight and leave this question open for future research. Rather, we proceed to find the operational dimension of the systems obtained through the $\overline{\text{SEP}}$ composition. To this aim, we first prove the following proposition. 
\begin{proposition}\label{prop1}
Every POPT state $W_{AB}\in\Omega_{\overline{\text{SEP}}}(\mathbb{C}^{d_A},\mathbb{C}^{d_B})$ can be written as $(\mathcal{I}_A\otimes\Lambda_{R \to B})(\rho_{AR})$ where $\Lambda$ is a positive, unital map and $\rho_{AR}\in\mathbb{C}^{d_A}\otimes\mathbb{C}^{d_R}$ is a pure quantum state independent of $W_{AB}$.
\end{proposition}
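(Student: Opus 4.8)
The plan is to lean on the Choi--Jamio\l{}kowski (CJ) correspondence already invoked just above: a POPT state, being a block-positive operator, is the Choi operator of a positive map, and the only extra work is to make that map \emph{unital}, which I will do by dilating it on a slightly enlarged reference system. The fixed pure reference state $\rho_{AR}$ will be (the vector form of) a maximally entangled state between $A$ and part of $R$.

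First I would isolate the one inequality everything hinges on. Expanding $W_{AB}=\sum_{k,l}\ketbra{k}{l}_A\otimes W^{(kl)}$ in a product basis, positivity on the product tests $\proj{k}_A\otimes\proj{\beta}_B$ forces $W^{(kk)}\succeq0$; since $\tr(W_{AB})=1$, this makes $\tr_A W_{AB}=\sum_k W^{(kk)}$ a bona fide density operator on $\mathbb{C}^{d_B}$, hence $\tr_A W_{AB}\preceq\mathbf{1}_B$. (More generally, the POPT condition is exactly block positivity, $\bra{\alpha\otimes\beta}W_{AB}\ket{\alpha\otimes\beta}\ge0$ for all product vectors.)

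Next, by CJ, block positivity of $W_{AB}$ is equivalent to positivity of the map $\Lambda'_{R_0\to B}$ fixed by $\Lambda'(\ketbra{k}{l})=W^{(kl)}$, where $R_0\cong\mathbb{C}^{d_A}$, so that $W_{AB}=(\mathcal{I}_A\otimes\Lambda')(\proj{\Omega}_{AR_0})$ with $\ket{\Omega}_{AR_0}=\sum_k\ket{k}_A\ket{k}_{R_0}$ the (unnormalised) maximally entangled vector --- a choice with no dependence on $W_{AB}$. This $\Lambda'$ need not be unital: one only has $\Lambda'(\mathbf{1}_{R_0})=\tr_A W_{AB}\preceq\mathbf{1}_B$. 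To cure this I would decompose the reference as $R=R_0\oplus R_1$ with $\dim R_1=m\ge1$, keep $\rho_{AR}=\ket{\Omega}_{AR_0}$ sitting inside the larger space, and set
\begin{align*}
\Lambda(\sigma_R)\;:=\;\Lambda'\!\big(P_0\,\sigma_R\,P_0\big)\;+\;\frac{1}{m}\,\tr\!\big[(\mathbf{1}_R-P_0)\,\sigma_R\big]\,\big(\mathbf{1}_B-\tr_A W_{AB}\big),
\end{align*}
with $P_0$ the projector onto $R_0$. Because $\ket{\Omega}$ lies entirely in $A\otimes R_0$, the second term vanishes on $\rho_{AR}$, so $(\mathcal{I}_A\otimes\Lambda)(\rho_{AR})=(\mathcal{I}_A\otimes\Lambda')(\proj{\Omega})=W_{AB}$; $\Lambda$ is positive since $\sigma_R\succeq0$ makes both $P_0\sigma_R P_0$ and $\tr[(\mathbf{1}_R-P_0)\sigma_R]$ non-negative while $\mathbf{1}_B-\tr_A W_{AB}\succeq0$; and $\Lambda(\mathbf{1}_R)=\tr_A W_{AB}+(\mathbf{1}_B-\tr_A W_{AB})=\mathbf{1}_B$, i.e.\ $\Lambda$ is unital.

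The step I expect to be the real obstacle is precisely this unitality upgrade: it must neither disturb the recovery of $W_{AB}$ nor let $\rho_{AR}$ acquire any $W$-dependence. Both are resolved by pushing all of the $W$-dependence into the map --- the ``active'' sector $R_0$ carries the CJ data, while the ancillary sector $R_1$ merely soaks up the trace deficit $\mathbf{1}_B-\tr_A W_{AB}$, whose positivity is exactly the marginal inequality of the second paragraph. This is also what makes the \emph{unnormalised} maximally entangled vector the right object: with the normalised one the deficit would instead read $\mathbf{1}_B-d_A\tr_A W_{AB}$, which is generally indefinite. Once $\Lambda$ is in hand, linearity, positivity and $\Lambda(\mathbf{1}_R)=\mathbf{1}_B$ are immediate, and the proposition follows.
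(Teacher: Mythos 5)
Your construction is sound and it is a genuinely different route from the paper's. The paper first ``filters'' $W_{AB}$ into an operator $W''_{AB}$ with $\tr_A W''_{AB}=\mathbf{1}_B$ (padding by $\mathbf{1}_A\otimes P_B^{\perp}$ and conjugating by $(W'_B)^{-1/2}$), reads off a positive \emph{unital} map via CJ, and then undoes the filtering operator $V_B=(W'_B)^{1/2}P_B$ unitally by dilating with a qubit ancilla $C$ through the CP unital map $\mathcal{Y}_{BC\to B}$ (adjoint of the CPTP map with Kraus operators $V_B\otimes\ket{0}_C$, $V'_B\otimes\ket{1}_C$). You instead apply CJ directly to $W_{AB}$, accept the non-unital $\Lambda'$ with $\Lambda'(\mathbf{1}_{R_0})=\tr_A W_{AB}\preceq\mathbf{1}_B$ (your marginal bound is exactly the right ingredient and is proved correctly), and restore unitality by appending a direct summand $R_1$ whose weight is dumped onto the deficit $\mathbf{1}_B-\tr_A W_{AB}$. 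Your version is shorter, avoids inverse square roots, and sidesteps the fact---used but not verified in the paper---that $(\mathbf{1}\otimes P_B)W_{AB}(\mathbf{1}\otimes P_B)=W_{AB}$ for POPT operators; positivity, unitality and recovery of $W_{AB}$ all check out in your construction.

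The one point where you do not deliver the proposition as literally stated is the one you flagged yourself: your fixed reference $\proj{\Omega}$ has trace $d_A$, so it is not a normalized pure quantum state. This cannot be repaired. If $\rho_{AR}=\proj{\psi}$ is normalized, fixed, and $\Lambda$ is positive and unital with $(\mathcal{I}\otimes\Lambda)(\proj{\psi})=W_{AB}$, then for any unit vector $\alpha\in\mathbb{C}^{d_A}$ one has $\bra{\alpha}_A W_{AB}\ket{\alpha}_A=\Lambda(\ketbra{r_\alpha}{r_\alpha})\preceq\|r_\alpha\|^2\,\mathbf{1}_B$ with $\ket{r_\alpha}:=(\bra{\alpha}\otimes\mathbf{1}_R)\ket{\psi}$; choosing $W_{AB}=\proj{\alpha}\otimes\proj{\beta}$ forces $\|r_\alpha\|^2\ge1$ for every $\alpha$, while Parseval gives $\sum_i\|r_{e_i}\|^2=\|\psi\|^2=1$ over any orthonormal basis---impossible for $d_A\ge2$. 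So the statement with a normalized fixed state and an exactly unital map is not provable; what your argument (and, de facto, the paper's construction) establishes is $W_{AB}=(\mathcal{I}_A\otimes\Lambda)(\proj{\chi^+}\otimes\cdots)$ with the \emph{unnormalized} maximally entangled operator, equivalently $W_{AB}=d_A\,(\mathcal{I}_A\otimes\Lambda)(\rho_{AR})$ with $\rho_{AR}$ normalized. The paper's closing rescaling ($\Lambda:=\tfrac{1}{d_S}\mathcal{Y}\circ\mathcal{U}'$ together with $\ket{\psi}:=\tfrac{1}{\sqrt{d_S}}\ket{\chi^+}\ket{0}_C$) runs into exactly this wall: with those definitions the final equality returns $\tfrac{1}{d_S^2}W_{AB}$ and the map is not unital, so you have identified a genuine defect of the proposition rather than committed an error. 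The caveat is not cosmetic: the proof of Theorem~\ref{theo2} uses $\mathbf{1}_{AR}-\rho\succeq0$, which requires the normalized version; with a trace-$d_A$ reference that argument only yields $N\le d_A\cdot d_Ad_B$.
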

\begin{proof}
We start by defining the following:\footnote{The techniques used in the proof are motivated from \cite{Horodecki06}.}
\begin{align*}
W^\prime_{AB}&:= W_{AB} + \mathbf{1}_A \otimes P_B^{\perp};~~~~P_B^{\perp}:= \mathbf{1}_B - P_B;\\
P_B &:= \text{Projector onto the support of } W_B;\\
W_B &:= \tr_A\left(W_{AB}\right);~~W^\prime_B := \tr_A\left(W^\prime_{AB}\right).
\end{align*}
$W'_{AB}$ is a POPT state (unnormalized) as for any separable effect $\pi_A \otimes \pi_B$ we have, $\tr{[(W'_{AB})(\pi_A \otimes \pi_B)]}= \tr{[(W_{AB})(\pi_A \otimes \pi_B)]} + \tr{(\pi_A)} \tr{(P_B^{\perp} \pi_B)}\ge0$. Thus, $W'_B$ is a full rank positive operator, and hence we can define:
\begin{align*}
W''_{AB}&:= \left[\mathbf{1}_A \otimes \left(W'_B\right)^{-1/2}\right] W'_{AB} \left[\mathbf{1}_A \otimes \left(W'_B\right)^{-1/2}\right],
\end{align*}
$\left(W'_B\right)^{-1/2}$, being a positive operator, implies that $W''_{AB}$ is a POPT: $\tr{[(W''_{AB})(\pi_A \otimes \pi_B)]}= \tr{\left[W'_{AB}\left(\pi_A \otimes \left(W'_B\right)^{-1/2} \pi_B \left(W'_B\right)^{-1/2} \right)\right]}\ge 0$. Further, $\tr_A{(W''_{AB})}= \sum_i \left(W'_B\right)^{-1/2}\bra{i}_AW'_{AB}\ket{i}_A \left(W'_B\right)^{-1/2}=  \left(W'_B\right)^{-1/2}W'_B\left(W'_B\right)^{-1/2} = \mathbf{1}_B$.

Using CJ isomorphism we can write
$W''_{AB} = \mathcal{I}_A \otimes \mathcal{U}_{S \to B} (\ket{\chi^+}_{AS}\bra{\chi^+})$, where $\ket{\chi^+}_{AB} := \sum_{i}{\ket{i}_{A} \ket{i}_B }$ is the unnormalized maximally entangled state and $\mathcal{U}_{S \to B}$ is a positive map. More explicitly, the action of  $\mathcal{U}_{S \to B}$ is given by,  $\mathcal{U}_{S \to B}(M_S) = \tr_S{[(M_S^{T} \otimes \mathbf{1}_B)(W''_{SB})]}$. $\mathcal{U}_{S \to B}$ is unital, since   $\mathcal{U}_{S \to B} (\mathbf{1}_{S})=\tr_S({W''_{SB}})=\mathbf{1}_B$. Furthermore, it is easy to check that $W_{AB} = (\mathbf{1}_A \otimes V_B^\dagger) W''_{AB} (\mathbf{1}_A \otimes V_B)$, where $V_B:= \left(W'_B\right)^{1/2} P_B$.

Let us now define a new completely positive, unital map 
\begin{align*}
\mathcal{Y}_{BC \to B}(M_{BC}):= V_B^{\dagger}  \bra{0}_C M\ket{0}_C V_B   + V_B^{'\dagger} \bra{1}_C M\ket{1}_C V_B^{'}~, 
\end{align*}
where $V_B^{'}$ is chosed so that it satisfies the condition $V_B^{\dagger} V_B + V_B^{'\dagger} V_B^{'} = \mathbf{1}_B$ and $\mathcal{H}_C := \mathbb{C}^2 $. The above map is the adjoint of the completely positive, trace preserving map having the Kraus operators $\{V_B \otimes \ket{0}_C , V_B^{'} \otimes \ket{1}_C\}$; and it has the property,
\begin{align*}
    \mathcal{I}_A \otimes \mathcal{Y}_{BC \to B}&(M_{AB} \otimes \ket{0}_C\bra{0}) \\
    &= (\mathbf{1}_A \otimes V_B^\dagger) M_{AB} (\mathbf{1}_A \otimes V_B).
\end{align*}
This further leads us to,
\begin{align*}
&W_{AB}= \left(\mathbf{1}_A \otimes V_B^\dagger\right) W''_{AB}\left(\mathbf{1}_A \otimes V_B\right)\\
&= \mathcal{I}_A \otimes \mathcal{Y}_{BC \to B}\left(W''_{AB} \otimes \ket{0}_C\bra{0}\right)\\
&= \mathcal{I}_A \otimes \mathcal{Y}_{BC \to B}\left[\left(\mathcal{I}_A \otimes \mathcal{U}_{S \to B}\right)\left(\ket{\chi^+}_{AS}\bra{\chi^+}\right) \otimes \ket{0}_C\bra{0} \right]\\
&= \left(\mathcal{I}_A \otimes \mathcal{Y}_{BC \to B }\right)\circ\left(\mathcal{I}_A \otimes \mathcal{U}_{S \to B} \otimes  \mathcal{I}_C\right)\left[\ket{\chi^+}_{AS}\bra{\chi^+} \otimes \ket{0}_C\bra{0}\right]\\
&= \mathcal{I}_A \otimes \left(\mathcal{Y}_{BC \to B }\circ \mathcal{U'}_{SC \to BC}\right) \left[\ket{\chi^+}_{AS}\bra{\chi^+} \otimes \ket{0}_C\bra{0}\right],
\end{align*}
where $\mathcal{U'}_{SC \to BC} := \mathcal{U}_{S \to B} \otimes  \mathcal{I}_C $. Let $d_S := \dim(\mathcal{H}_S)$, $\Lambda_{SC \to B} := \frac{1}{d_S} \mathcal{Y}_{BC \to B} \circ \mathcal{U'}_{SC \to BC}$, $\ket{\psi}_{ASC}:= \frac{1}{\sqrt{d_S}} \ket{\chi^+}_{AS}\ket{0}_C$, and $\mathcal{H}_R := \mathcal{H}_S \otimes \mathcal{H}_C$. Thus we have,
\begin{align*}
W_{AB}&= (\mathcal{I}_A \otimes \Lambda_{R \to B}) (\ket{\psi}_{AR}\bra{\psi}),
\end{align*}
where $\Lambda_{R \to B}$ is the composition of a completely positive unital map and a positive  unital map; and therefore it is positive and unital. This completes the proof.
\end{proof}
We are now in a position to prove another important result of this work. 
\begin{theorem}\label{theo2}
The Operational Dimension of the system $[\mathbb{C}^{d_A},\mathbb{C}^{d_B},\bigotimes_{\overline{\text{SEP}}}]$ is $d_A d_B$.  
\end{theorem}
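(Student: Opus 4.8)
The plan is to prove the two inequalities $\mathbb{O}[\mathbb{C}^{d_A},\mathbb{C}^{d_B},\bigotimes_{\overline{\text{SEP}}}]\ge d_Ad_B$ and $\mathbb{O}[\mathbb{C}^{d_A},\mathbb{C}^{d_B},\bigotimes_{\overline{\text{SEP}}}]\le d_Ad_B$ separately, the second being the substantive one.

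The lower bound is immediate. Fix an orthonormal product basis $\{\ket{i}_A\otimes\ket{j}_B\}$ of $\mathbb{C}^{d_A}\otimes\mathbb{C}^{d_B}$; these $d_Ad_B$ rank-one projectors lie in $\Omega_{\text{SEP}}(\mathbb{C}^{d_A},\mathbb{C}^{d_B})\subseteq\Omega_{\overline{\text{SEP}}}(\mathbb{C}^{d_A},\mathbb{C}^{d_B})$, and the measurement $\{\ket{i}\bra{i}_A\otimes\ket{j}\bra{j}_B\}_{i,j}$ is built entirely from product effects, hence belongs to $\mathcal{E}_{\overline{\text{SEP}}}(\mathbb{C}^{d_A},\mathbb{C}^{d_B})$ and distinguishes them in a single shot, so $\mathbb{O}\ge d_Ad_B$.

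For the upper bound, suppose $W_1,\dots,W_n\in\Omega_{\overline{\text{SEP}}}(\mathbb{C}^{d_A},\mathbb{C}^{d_B})$ are perfectly distinguished by one measurement; after merging outcomes we may take it to be $\{e_1,\dots,e_n\}$ with $e_j\in\mathcal{E}_{\overline{\text{SEP}}}$, $\sum_je_j=\mathbf{1}_{d_A}\otimes\mathbf{1}_{d_B}$ and $\tr(W_ie_j)=\delta_{ij}$. The first remark is that every $e_j$, lying in $\mathcal{E}_{\overline{\text{SEP}}}$, is a separable operator and therefore positive semidefinite, so $\{e_j\}$ is already a legitimate quantum POVM on $\mathbb{C}^{d_A}\otimes\mathbb{C}^{d_B}$; moreover, refining each $e_j=\sum_kc_{jk}\,\pi_A^{jk}\otimes\pi_B^{jk}$ into its product pieces still identifies the state (record which $e_j$ an outcome came from), so without loss of generality the distinguishing measurement consists of product effects $P_A^m\otimes Q_B^m$. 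For an outcome $m$ not assigned to $W_i$ we then have $0=\tr[W_i(P_A^m\otimes Q_B^m)]$, and invoking Proposition~\ref{prop1} to write $W_i=(\mathcal{I}_A\otimes\Lambda_i)(\ket{\psi}_{AR}\bra{\psi})$ with $\ket{\psi}$ a \emph{fixed} maximally entangled state and $\Lambda_i$ positive and unital, this reads $0=\tr[\ket{\psi}\bra{\psi}\,(P_A^m\otimes\Lambda_i^\star(Q_B^m))]$ with both tensor factors positive semidefinite. Positivity of $\ket{\psi}\bra{\psi}$ forces the support relation $(P_A^m\otimes\Lambda_i^\star(Q_B^m))\ket{\psi}=0$, and since $\ket{\psi}$ is maximally entangled across $A{:}R$ this is, via the Choi correspondence, a genuine orthogonality constraint relating $P_A^m$ on the $A$ side to $\Lambda_i^\star(Q_B^m)$ on the ancilla; for the one outcome $m$ that \emph{is} assigned to $W_i$ the corresponding quantity is instead strictly positive.

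The remaining — and main — obstacle is to assemble these support relations, one family per state, into a single counting bound with the \emph{correct} value $d_Ad_B$ (rather than a weaker $d_A^2$, $d_B^2$, or $d_Ad_R$ that one gets by naively pushing a single $\Lambda_i^\star$ through to the ancilla). The delicate point is that the maps $\Lambda_i$ furnished by Proposition~\ref{prop1} depend on $i$ while the reference state $\ket{\psi}$ does not: it is precisely this common, maximally entangled $\ket{\psi}$ — as opposed to state-dependent purifications — that lets one glue the per-state orthogonality relations back onto the fixed $d_Ad_B$-dimensional space $\mathbb{C}^{d_A}\otimes\mathbb{C}^{d_B}$. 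One then argues a rank/support bound on that space — e.g.\ that the subspaces $\bigcap_{k\ne j}\ker e_k$ can be taken nonzero, hence being mutually orthogonal they number at most $\dim(\mathbb{C}^{d_A}\otimes\mathbb{C}^{d_B})=d_Ad_B$ — so $n\le d_Ad_B$. Together with the lower bound this yields $\mathbb{O}[\mathbb{C}^{d_A},\mathbb{C}^{d_B},\bigotimes_{\overline{\text{SEP}}}]=d_Ad_B$. (The same two ingredients — separable effects being positive semidefinite, and the Proposition~\ref{prop1} dilation running through one fixed quantum state — are what drive the general no-go that no intermediate composition can exceed the quantum classical capacity.)
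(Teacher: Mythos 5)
Your lower bound is fine and matches the paper's. The upper bound, however, has a genuine gap exactly at the step you yourself flag as the ``main obstacle'': you never actually carry out the counting, and the argument you sketch in its place does not work. The suggestion that the subspaces $\bigcap_{k\ne j}\ker e_k$ ``can be taken nonzero'' would follow from $\tr(W_je_k)=0$ only if the $W_j$ were positive semidefinite (so that $\tr(W_je_k)=0$ with $e_k\ge 0$ forces $\mathrm{supp}(W_j)\subseteq\ker e_k$). But the whole point of Theorem~\ref{theo2} is that the $W_j$ range over $\Omega_{\overline{\text{SEP}}}$, which contains non-positive operators (entanglement witnesses); for those states no kernel inclusion on $\mathbb{C}^{d_A}\otimes\mathbb{C}^{d_B}$ is implied, so the mutual-orthogonality count begs the question. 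The genuine orthogonality relations you do derive via Proposition~\ref{prop1}, namely $(P_A^m\otimes\Lambda_i^\star(Q_B^m))\ket{\psi}=0$, live on $\mathcal{H}_A\otimes\mathcal{H}_R$ and involve the $i$-dependent maps $\Lambda_i^\star$; you correctly note that naively counting there gives $d_Ad_R$ rather than $d_Ad_B$, but you never show how to convert them into the claimed bound. (The refinement into product effects is harmless but also unnecessary.)

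The paper closes this gap by a trace-counting argument that avoids any support analysis and crucially uses the \emph{unitality} of the maps from Proposition~\ref{prop1}, which your proposal never invokes. Writing $W_i=(\mathcal{I}\otimes\Lambda_i)(\rho)$ with $\Lambda_i$ positive and unital, one has $(\mathcal{I}\otimes\Lambda_i)(\mathbf{1}_{AR})=\mathbf{1}_{AB}$, hence $\tr(E_i)=\tr\!\left[E_i(\mathcal{I}\otimes\Lambda_i)(\rho)\right]+\tr\!\left[(\mathcal{I}\otimes\Lambda_i^{*})(E_i)\,(\mathbf{1}_{AR}-\rho)\right]\ge\tr(E_iW_i)=1$, where the discarded term is nonnegative because $E_i$ is separable and $\Lambda_i^{*}$ is positive, so $(\mathcal{I}\otimes\Lambda_i^{*})(E_i)\ge0$, while $\mathbf{1}_{AR}-\rho\ge0$. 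Summing over $i$ gives $N\le\sum_i\tr(E_i)=\tr(\mathbf{1}_{AB})=d_Ad_B$. Note this route does not even need the reference state $\rho$ to be common to all $W_i$ (only that it is a bona fide quantum state and each $\Lambda_i$ is positive and unital), so the feature you lean on --- the fixed maximally entangled $\ket{\psi}$ --- is not the decisive ingredient; unitality plus separability of the effects is. To repair your proof you would need to replace the kernel-intersection sketch with an argument of this kind.
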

\begin{proof}
The proof is similar in spirit to Lemma 24 of Ref.\cite{Muller12}. However, while the assumption of `transitivity' is used there, here we use the Proposition \ref{prop1}.

Let the Operational dimension of the system $[\mathbb{C}^{d_A},\mathbb{C}^{d_B},\bigotimes_{\overline{\text{SEP}}}]$ be $N$. $N$ must be lower bounded by $d:=d_A d_B$, as there exists $d$ number of quantum states that can be perfectly distinguished by a single separable measurement. For instance, the set of pure states $\{\ket{ij}~|~i=1,\cdots d_A~ \&~j=1,\cdots d_B\}$ can be distinguished by the separable measurement $\{\ketbra{i}{i} \otimes \ketbra{j}{j}\}_{i,j=1}^{d_A,d_B}$. As we have considered the operational dimension of the system $[\mathbb{C}^{d_A},\mathbb{C}^{d_B},\bigotimes_{\overline{\text{SEP}}}]$ to be $N$, there must exist $N$ POPT states $\{W_1, \cdots, W_N\}$ and a separable measurement  $\{E_1,\cdots,E_N~|~\sum_{i=1}^{N} E_i =\mathbf{1}_{AB}\}$ such that $Tr(E_i W_j)=\delta_{ij} ~;~\forall i,j$. According to Proposition \ref{prop1}, $\forall~j,~W_j = (\mathcal{I}\otimes\Lambda_j)(\rho)$ for some positive, unital map $\Lambda_j \colon \mathcal{L}(\mathcal{H}_R) \to \mathcal{L}(\mathcal{H}_B)$ and pure state $\rho \in \mathcal{L}(\mathcal{H}_{A} \otimes \mathcal{H}_{R})$. Denoting the projector on the orthogonal support of $\rho$ as $P := \mathbf{1}_{AR} - \rho$, we have 
\begin{align*}
d &=  Tr(\mathbf{1}_{AB}) = \sum_{i=1}^{N} Tr(E_i)= \sum_{i=1}^{N} Tr[E_i (\mathcal{I}\otimes\Lambda_i)(\mathbf{1}_{AR})]\nonumber\\
&= \sum_{i=1}^{N} Tr[E_i (\mathcal{I}\otimes\Lambda_i)( \rho + P )]\nonumber\\
&= \sum_{i=1}^{N} Tr[E_i (\mathcal{I}\otimes\Lambda_i)(\rho )] + \sum_{i=1}^{N} Tr[E_i (\mathcal{I}\otimes\Lambda_i)( P )]\nonumber\\
&= \sum_{i=1}^{N} Tr[E_i W_i] + \sum_{i=1}^{N} Tr[(\mathcal{I}\otimes\Lambda_i^*)(E_i)  P ],   \end{align*}
where $\Lambda_i^*$ is the adjoint map of $\Lambda_i$ and hence positive. Furthermore, $(\mathbb{1}\otimes\Lambda_i^*)(E_i)$ are positive operators since $E_i$'s are separable. Therefore we have,
\begin{align*}
d &\ge  \sum_{i=1}^{N} Tr[E_i W_i]= \sum_{i=1}^{N} \delta_{ii}=N. \label{eq3}
\end{align*}
Since we know that $N\ge d$, therefore we conclude $N=d$. This completes the proof.
\end{proof}
While in Theorem \ref{theo1} we have shown that the $\overline{\text{SEP}}$ composition of two elementary qubits can yield stronger timelike correlation than their quantum composition ({\it i.e.}, two-qubit), Theorem \ref{theo2} establishes that such a composition is not strong enough to show super-additive feature of the information carrying capacity \cite{Self1}. In this regard, a more generic result is presented in the next proposition.  

\begin{proposition}\label{prop2}
The operational dimension of any bipartite composition (with normalized state space denoted as $\Omega_{AB}$) of two elementary quantum systems $\mathbb{C}^{d_A}$ and $\mathbb{C}^{d_B}$ is $d_Ad_B$  if $(\mathbf{1}_{AB}-W_{AB})$ lies within the unnormalised state cone $\forall ~ W_{AB} \in \Omega_{AB}$. 
\end{proposition}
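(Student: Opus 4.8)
The plan is to prove that the operational dimension $N$ of the system with state space $\Omega_{AB}$ equals $d:=d_Ad_B$ by establishing the two matching bounds $N\ge d$ and $N\le d$, following the same template as the proof of Theorem~\ref{theo2} but replacing the appeal to Proposition~\ref{prop1} by the hypothesis that $\mathbf{1}_{AB}-W_{AB}$ belongs to the unnormalised state cone of $\Omega_{AB}$. For the lower bound I would use that any admissible composition satisfies $\Omega_{\text{SEP}}(\mathbb{C}^{d_A},\mathbb{C}^{d_B})\subseteq\Omega_{AB}\subseteq\Omega_{\overline{\text{SEP}}}(\mathbb{C}^{d_A},\mathbb{C}^{d_B})$, so the $d$ product states $\{\ket{ij}\bra{ij}\}$ lie in $\Omega_{AB}$, while the product effects $\{\ket{i}\bra{i}\otimes\ket{j}\bra{j}\}$ sum to $\mathbf{1}_{AB}$ and take values in $[0,1]$ on every POPT state, hence form a valid measurement of $\Omega_{AB}$; this single measurement distinguishes the $d$ states, giving $N\ge d$.

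The substantive part is the upper bound. Suppose $\{W_1,\dots,W_N\}\subset\Omega_{AB}$ are perfectly distinguished by a measurement $\{E_1,\dots,E_N\}$ with $\sum_iE_i=\mathbf{1}_{AB}$ and $\Tr(E_iW_j)=\delta_{ij}$. The key observation I would invoke is that, by hypothesis, for each $i$ the operator $\mathbf{1}_{AB}-W_i$ equals $\lambda_i\tilde W_i$ for some $\lambda_i\ge0$ and $\tilde W_i\in\Omega_{AB}$, so that $\Tr[E_i(\mathbf{1}_{AB}-W_i)]=\lambda_i\,E_i(\tilde W_i)\ge0$ since $E_i$ is a legitimate effect. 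Hence $\Tr(E_i)=\Tr(E_iW_i)+\Tr[E_i(\mathbf{1}_{AB}-W_i)]\ge\Tr(E_iW_i)=1$, and summing over $i$ gives $d=\Tr(\mathbf{1}_{AB})=\sum_i\Tr(E_i)\ge N$. Together with $N\ge d$ this forces $N=d$.

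The step that needs the most care — and the reason the extra hypothesis is imposed — is precisely the positivity of the residual term $\Tr[E_i(\mathbf{1}_{AB}-W_i)]$. In Theorem~\ref{theo2} this was obtained indirectly: every $\overline{\text{SEP}}$ effect is separable, so it could be pushed through the positive adjoint map furnished by Proposition~\ref{prop1} and still yield a positive operator. For a generic composition the $E_i$ need not be separable, so that route is unavailable, and the assumption that $\mathbf{1}_{AB}-W_{AB}$ stays in the state cone is exactly what lets the effect $E_i$, being nonnegative on all states, be nonnegative on $\mathbf{1}_{AB}-W_i$ as well. I would close by noting that the SEP, quantum, and $\overline{\text{SEP}}$ compositions all satisfy this condition (for the quantum case it is just $W\le\mathbf{1}_{AB}$; for $\overline{\text{SEP}}$ it follows from $\bra{ab}W\ket{ab}\le1$), so Proposition~\ref{prop2} uniformly recovers Theorem~\ref{theo2} and the familiar quantum value $d_Ad_B$.
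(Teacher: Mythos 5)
Your proof is correct and follows essentially the same route as the paper's: the lower bound $N\ge d_Ad_B$ from product states distinguished by a product measurement, and the upper bound from $\sum_i\Tr[E_i(\mathbf{1}_{AB}-W_i)]=d_Ad_B-N\ge0$, which is nonnegative precisely because each $\mathbf{1}_{AB}-W_i$ lies in the state cone and each $E_i$ is a valid effect. Your per-term phrasing ($\Tr(E_i)\ge1$ for each $i$) and the closing remarks verifying the hypothesis for the SEP, quantum, and $\overline{\text{SEP}}$ compositions are harmless additions beyond the paper's argument.
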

\begin{proof}
Let $N$ be the operational dimension of the composite system. Then, there must exist a set containing $N$ states $\{W_1, \cdots, W_N\}$ and measurement $\{E_1,\cdots,E_N~|~\sum_{i=1}^{N} E_i =\mathbf{1}_{AB}\}$ such that $Tr(E_i W_j)=\delta_{ij}$, $\forall~i,j$. Manifestly it follows that $N\ge d:=d_Ad_B$, since $\Omega_{\text{SEP}}\subseteq\Omega_{AB}\subseteq\Omega_{\overline{\text{SEP}}}$ and $\mathcal{E}_{\overline{\text{SEP}}}\subseteq\mathcal{E}_{AB}\subseteq\mathcal{E}_{\text{SEP}}$. As argued in the proof of Theorem \ref{theo2}, there always exist $d_Ad_B$ number of product states that can be perfectly distinguished by a separable measurement. On the other hand,
\begin{align*}
\sum_{i=1}^{N} Tr(E_i (\mathbf{1} - W_i)) &= \sum_{i=1}^{N} Tr(E_i) - \sum_{i=1}^{N} Tr(E_i W_i) \\
&= d - \sum_{i=1}^{N} \delta_{ii} = d - N.
\end{align*}
Since $(\mathbf{1}_{AB}-W_{AB})$ is an unnormalised state by assumption, $d-N \ge 0$ or $d\ge N$, which completes the proof.
\end{proof}
While Proposition \ref{prop2} assumes elementary systems to be quantum it can however be further generalized within the GPT framework. 
\begin{proposition}\label{prop3}
Let $\mathcal{S}_{AB}\equiv(\Omega_{AB},\mathcal{E}_{AB})$ be a composite systems consisting two elementary systems $\mathcal{S}_{A}\equiv (\Omega_{A},\mathcal{E}_{A})$ and $\mathcal{S}_{B}\equiv (\Omega_{B},\mathcal{E}_{B})$ with operational dimensions $N_A$ and $N_B$ respectively. The operational dimension of $\mathcal{S}_{AB}$ is $N_A N_B$ if $~\exists~\omega'_{AB} \in \Omega_{AB}$ such that $(N_A N_B \omega'_{AB} - \omega_{AB})$ is an unnormalized state $\forall ~ \omega_{AB} \in \Omega_{AB}$.
\end{proposition}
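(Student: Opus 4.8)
The plan is to mirror the structure of the proof of Proposition~\ref{prop2}, but with the identity operator $\mathbf{1}_{AB}$ (and the factor $d_Ad_B$) replaced by the abstract special state $\omega'_{AB}$ (and the factor $N_AN_B$). First I would establish the lower bound $N\ge N_AN_B$: take a set of $N_A$ states in $\Omega_A$ that is perfectly distinguishable in a single measurement and a similar set of $N_B$ states in $\Omega_B$; their $N_AN_B$ product states lie in $\Omega_{AB}$ (since the minimal tensor product is always contained in $\Omega_{AB}$), and the tensor products of the corresponding local effects form a valid measurement on $\mathcal{S}_{AB}$ (since the minimal effect cone, i.e.\ products of local effects, is always available), which distinguishes them perfectly. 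Hence $N\ge N_AN_B$.

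For the upper bound, suppose $\{W_1,\dots,W_N\}\subset\Omega_{AB}$ and a measurement $\{E_1,\dots,E_N\mid\sum_iE_i=u_{AB}\}\subset\mathcal{E}_{AB}$ satisfy $E_i(W_j)=\delta_{ij}$. The key computation is
\begin{align*}
\sum_{i=1}^N E_i\big(N_AN_B\,\omega'_{AB}-W_i\big)
&= N_AN_B\sum_{i=1}^N E_i(\omega'_{AB})-\sum_{i=1}^N E_i(W_i)\\
&= N_AN_B\,u_{AB}(\omega'_{AB})-N = N_AN_B-N,
\end{align*}
using $\sum_iE_i=u_{AB}$ and $u_{AB}(\omega'_{AB})=1$ (normalization of $\omega'_{AB}$). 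By hypothesis each $N_AN_B\,\omega'_{AB}-W_i$ is an unnormalized state, so each term $E_i(N_AN_B\,\omega'_{AB}-W_i)$ is nonnegative (effects are nonnegative on the unnormalized state cone). Therefore $N_AN_B-N\ge0$, i.e.\ $N\le N_AN_B$, and combining with the lower bound gives $N=N_AN_B$.

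The only genuinely delicate point is the lower bound, and specifically the claim that products of locally distinguishing effects really are admissible effects of the composite $\mathcal{S}_{AB}$. In the earlier (quantum-local) propositions this was immediate from the chain $\mathcal{E}_{\overline{\text{SEP}}}\subseteq\mathcal{E}_{AB}\subseteq\mathcal{E}_{\text{SEP}}$; in the abstract GPT setting I would invoke the analogous structural fact that any consistent composite of $\mathcal{S}_A$ and $\mathcal{S}_B$ sits between the minimal and maximal tensor products, so that $\Omega_{AB}^{\min}\subseteq\Omega_{AB}$ and the product-effect cone $(\mathcal{E}_A\otimes\mathcal{E}_B)$-hull is contained in $\mathcal{E}_{AB}$. (If one does not wish to assume the no-restriction/tensor-product bounds explicitly, one can instead simply add to the statement the mild standing assumption that product states and product measurements are available, which is implicit in calling $\mathcal{S}_{AB}$ a ``composite'' of $\mathcal{S}_A$ and $\mathcal{S}_B$.) Everything else is the same two-line trace/functional manipulation as in Proposition~\ref{prop2}, with $\omega'_{AB}$ playing the role that $\tfrac{1}{d_Ad_B}\mathbf{1}_{AB}$ played there.
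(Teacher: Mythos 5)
Your proposal is correct and follows essentially the same route as the paper's proof: the lower bound from the availability of product states and product effects in any valid composition, and the upper bound via the telescoping sum $\sum_i E_i(N_AN_B\,\omega'_{AB}-W_i)=N_AN_B-N\ge0$ using normalization of $\omega'_{AB}$ and the hypothesis that each $N_AN_B\,\omega'_{AB}-W_i$ lies in the unnormalized state cone. Your extra remarks on why product effects are admissible simply make explicit what the paper asserts in one line, so there is no substantive difference.
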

\begin{proof}
The operational dimension $N_{AB}$ of the composite system $\mathcal{S}_{AB}$ is always greater than the product of the operational dimension of the elementary systems, {\it i.e.}, $N_{AB}\ge N_{A}N_{B}$. This simply follows form the fact that any valid composition includes the products states and product effects in its description. Now we have,
\begin{align*}
\sum_{i=1}^{N_{AB}} e_i&(N_A N_B \omega' - \omega_i) = N_A N_B \sum_{i=1}^{N_{AB}} e_i( \omega')- \sum_{i=1}^{N_{AB}} e_i( \omega_i)\\
&= N_A N_B u( \omega')- \sum_{i=1}^{N_{AB}} \delta_{ii}=N_A N_B - N_{AB}\ge0,
\end{align*}
since by assumption, $(N_A N_B \omega' - \omega_i)$ is an unnormalised state $\forall ~ \omega_i$. Therefore, $N_{AB}\le N_A N_B$, which completes the proof.
\end{proof}
Form Theorem \ref{theo1} and Theorem \ref{theo2} we can conclude that $\Delta[\mathbb{C}^2,\mathbb{C}^2,\bigotimes_{\overline{\text{SEP}}}]\ge 4$, where $\Delta$ refers to the dimension mismatch of the theory. On the other hand, we can also conclude that the gap between the information dimensions of the systems $[\mathbb{C}^2,\mathbb{C}^2,\bigotimes_{\overline{\text{SEP}}}]$ and $[\mathbb{C}^2,\mathbb{C}^2,\bigotimes_{Q}]$ is at least $4$, {\it i.e.},
\begin{align*}
\mathbb{I}\left[\mathbb{C}^2,\mathbb{C}^2,\otimes_{\overline{\text{SEP}}}\right]-\mathbb{I}\left[\mathbb{C}^2,\mathbb{C}^2,\otimes_{Q}\right]\ge4.
\end{align*}
Our next result shows that this gap can be increased further by considering more number of elementary systems.
\begin{table}[t!]
\begin{tabular}{|p{1.5cm}||p{3cm}|p{3cm}|}
\hline
~~~~Measu. &~Odd no. 'up' spin  &~Even no. 'up' spin \\
\hline\hline
& $\Phi^{-}_{000},\Phi^{-}_{001},
\Phi^{+}_{010},\Phi^{+}_{011}$ 

&$\Phi^{+}_{000},\Phi^{+}_{001},
\Phi^{-}_{010},\Phi^{-}_{011}$\\
$~~~(y,y,x)$

&$\overline{\Phi^{+}_{000}},\overline{\Phi^{+}_{001}},\overline{\Phi^{-}_{010}},\overline{\Phi^{-}_{011}}$ 
& $\overline{\Phi^{-}_{000}},\overline{\Phi^{-}_{001}},\overline{\Phi^{+}_{010}},\overline{\Phi^{+}_{011}}$\\
& $\overline{\overline{\Phi^{+}_{000}}},\overline{\overline{\Phi^{+}_{001}}},\overline{\overline{\Phi^{-}_{010}}},\overline{\overline{\Phi^{-}_{011}}}$ 
& $\overline{\overline{\Phi^{-}_{000}}},\overline{\overline{\Phi^{-}_{001}}},\overline{\overline{\Phi^{+}_{010}}},\overline{\overline{\Phi^{+}_{011}}}$\\
&&\\
\hline
&$\Phi^{-}_{000},\Phi^{+}_{001},
\Phi^{-}_{010},\Phi^{+}_{011}$ 
&$\Phi^{+}_{000},\Phi^{-}_{001},
\Phi^{+}_{010},\Phi^{-}_{011}$\\
&&\\
$~~~(y,x,y)$
&$\overline{\Phi^{-}_{000}},\overline{\Phi^{+}_{001}},\overline{\Phi^{-}_{010}},\overline{\Phi^{+}_{011}}$ 
&$\overline{\Phi^{+}_{000}},\overline{\Phi^{-}_{001}},\overline{\Phi^{+}_{010}},\overline{\Phi^{-}_{011}}$\\
&&\\
&$\overline{\overline{\Phi^{+}_{000}}},\overline{\overline{\Phi^{-}_{001}}},\overline{\overline{\Phi^{+}_{010}}},\overline{\overline{\Phi^{-}_{011}}}$ 
&$\overline{\overline{\Phi^{-}_{000}}},\overline{\overline{\Phi^{+}_{001}}},\overline{\overline{\Phi^{-}_{010}}},\overline{\overline{\Phi^{+}_{011}}}$\\
&&\\
\hline
&$\Phi^{+}_{000},\Phi^{+}_{001},
\Phi^{+}_{010},\Phi^{+}_{011}$ 
&$\Phi^{-}_{000},\Phi^{-}_{001},
\Phi^{-}_{010},\Phi^{-}_{011}$\\
&&\\
$~~~(x,x,x)$
&$\overline{\Phi^{+}_{000}},\overline{\Phi^{+}_{001}},\overline{\Phi^{+}_{010}},\overline{\Phi^{+}_{011}}$ 
&$\overline{\Phi^{-}_{000}},\overline{\Phi^{-}_{001}},\overline{\Phi^{-}_{010}},\overline{\Phi^{-}_{011}}$\\
&&\\
& $\overline{\overline{\Phi^{+}_{000}}},\overline{\overline{\Phi^{+}_{001}}},\overline{\overline{\Phi^{+}_{010}}},\overline{\overline{\Phi^{+}_{011}}}$
&$\overline{\overline{\Phi^{-}_{000}}},\overline{\overline{\Phi^{-}_{001}}},\overline{\overline{\Phi^{-}_{010}}},\overline{\overline{\Phi^{-}_{011}}}$\\
&&\\
\hline
&$\Phi^{-}_{000},\Phi^{+}_{001},
\Phi^{+}_{010},\Phi^{-}_{011}$ 
&$\Phi^{+}_{000},\Phi^{-}_{001},
\Phi^{-}_{010},\Phi^{+}_{011}$\\
&&\\
$~~~(x,y,y)$
&$\overline{\Phi^{+}_{000}},\overline{\Phi^{-}_{001}},\overline{\Phi^{-}_{010}},\overline{\Phi^{+}_{011}}$ 
& $\overline{\Phi^{-}_{000}},\overline{\Phi^{+}_{001}},\overline{\Phi^{+}_{010}},\overline{\Phi^{-}_{011}}$\\
&&\\
&$\overline{\overline{\Phi^{-}_{000}}},\overline{\overline{\Phi^{+}_{001}}},\overline{\overline{\Phi^{+}_{010}}},\overline{\overline{\Phi^{-}_{011}}}$ 
& $\overline{\overline{\Phi^{+}_{000}}},\overline{\overline{\Phi^{-}_{001}}},\overline{\overline{\Phi^{-}_{010}}},\overline{\overline{\Phi^{+}_{011}}}$\\
&&\\
\hline
& $\Phi^{+}_{000},\Phi^{-}_{000},
\Phi^{+}_{011},\Phi^{-}_{011}$ 
&$\Phi^{+}_{001},\Phi^{-}_{001},
\Phi^{+}_{010},\Phi^{-}_{010}$\\
&&\\
$~~~(y,z,z)$
&$\overline{\Phi^{+}_{000}},\overline{\Phi^{-}_{000}},\overline{\Phi^{+}_{011}},\overline{\Phi^{-}_{011}}$ 
& $\overline{\Phi^{+}_{001}},\overline{\Phi^{-}_{001}},\overline{\Phi^{+}_{010}},\overline{\Phi^{-}_{010}}$\\
&&\\

&$\overline{\overline{\Phi^{+}_{000}}},\overline{\overline{\Phi^{-}_{000}}},\overline{\overline{\Phi^{+}_{011}}},\overline{\overline{\Phi^{-}_{011}}}$
& $\overline{\overline{\Phi^{+}_{001}}},\overline{\overline{\Phi^{-}_{001}}},\overline{\overline{\Phi^{+}_{010}}},\overline{\overline{\Phi^{-}_{010}}}$\\
&&\\
\hline
& $\Phi^{+}_{000},\Phi^{-}_{000},
\Phi^{+}_{001},\Phi^{-}_{011}$ 
&$\Phi^{+}_{010},\Phi^{-}_{010},
\Phi^{+}_{011},\Phi^{-}_{011}$\\
&&\\
$~~~(z,z,y)$
&$\overline{\Phi^{+}_{000}},\overline{\Phi^{-}_{000}},\overline{\Phi^{+}_{001}},\overline{\Phi^{-}_{001}}$ 
& $\overline{\Phi^{+}_{010}},\overline{\Phi^{-}_{010}},\overline{\Phi^{+}_{011}},\overline{\Phi^{-}_{011}}$\\
&&\\
& $\overline{\overline{\Phi^{+}_{000}}},\overline{\overline{\Phi^{-}_{000}}},\overline{\overline{\Phi^{+}_{001}}},\overline{\overline{\Phi^{-}_{001}}}$
& $\overline{\overline{\Phi^{+}_{010}}},\overline{\overline{\Phi^{-}_{010}}},\overline{\overline{\Phi^{+}_{011}}},\overline{\overline{\Phi^{-}_{011}}}$\\
&&\\
\hline
& $\Phi^{+}_{000},\Phi^{-}_{000},
\Phi^{+}_{010},\Phi^{-}_{010}$ 
&$\Phi^{+}_{001},\Phi^{-}_{001},
\Phi^{+}_{011},\Phi^{-}_{011}$\\
&&\\
$~~~(z,y,z)$
&$\overline{\Phi^{+}_{000}},\overline{\Phi^{-}_{000}},\overline{\Phi^{+}_{010}},\overline{\Phi^{-}_{010}}$ 
&$\overline{\Phi^{+}_{001}},\overline{\Phi^{-}_{001}},\overline{\Phi^{+}_{011}},\overline{\Phi^{-}_{011}}$\\
&&\\
&$\overline{\overline{\Phi^{+}_{000}}},\overline{\overline{\Phi^{-}_{000}}},\overline{\overline{\Phi^{+}_{010}}},\overline{\overline{\Phi^{-}_{010}}}$ 
& $\overline{\overline{\Phi^{+}_{001}}},\overline{\overline{\Phi^{-}_{001}}},\overline{\overline{\Phi^{+}_{011}}},\overline{\overline{\Phi^{-}_{011}}}$\\
&&\\
\hline
\end{tabular}
\caption{Pairwise distinguishablity of the set $\$[24]$. Using a particular separable measurement given in the first column, any state on the odd no. 'up' spin (second) column can be distinguished from any state on the even no. 'up' spin (third) column. For instance, the pair $\{\overline{\overline{\Phi^+_{000}}},\overline{\overline{\Phi^+_{001}}}\}$ (last row) is perfectly distinguishable via the separable measurement consisting of POVM elements given by  $\mathcal{M}\equiv \{E_{odd}, E_{even}\}$, where $E_{odd}$ and $E_{even}$ are rank four projectors comprising odd number of spin up and even number of spin up outcomes respectively for the Pauli measurement $(\sigma_Z, \sigma_Y, \sigma_Z)\equiv (z,y,z)$.}\label{tab2}
\end{table}

\begin{theorem} \label{theo3}
The Information Dimension of the system $[\mathbb{C}^{2},\mathbb{C}^{2},\mathbb{C}^{2},\bigotimes_{\overline{\text{SEP}}}]$ is at least $24$.
\end{theorem}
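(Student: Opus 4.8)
The plan is to prove the bound exactly as Theorem~\ref{theo1} and Corollary~\ref{coro1} were proved, but one subsystem higher: exhibit an explicit set $\$[24]$ of twenty-four mutually distinct POPT states in $\Omega_{\overline{\text{SEP}}}(\mathbb{C}^{2},\mathbb{C}^{2},\mathbb{C}^{2})$ that are pairwise perfectly distinguishable by measurements whose effects all lie in $\mathcal{E}_{\overline{\text{SEP}}}(\mathbb{C}^{2},\mathbb{C}^{2},\mathbb{C}^{2})$. Since the information dimension only requires pairwise distinguishability --- different pairs may be told apart by different measurements --- producing such a set immediately yields $\mathbb{I}[\mathbb{C}^{2},\mathbb{C}^{2},\mathbb{C}^{2},\bigotimes_{\overline{\text{SEP}}}]\ge 24$.

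For the state set I would start from a ``generalized GHZ'' orthonormal basis $\{\Phi^{\pm}_{abc}\}$ of $(\mathbb{C}^{2})^{\otimes 3}$ built from eight pure quantum states (with $abc$ running over $\{000,001,010,011\}$ and $\pm$ over the two relative phases), and adjoin its images under partial transposition with respect to one and with respect to another of the three qubits, giving the families $\{\overline{\Phi^{\pm}_{abc}}\}$ and $\{\overline{\overline{\Phi^{\pm}_{abc}}}\}$ and hence $3\times 8=24$ Hermitian, trace-one operators in all. That every one of these lies in $\Omega_{\overline{\text{SEP}}}(\mathbb{C}^{2},\mathbb{C}^{2},\mathbb{C}^{2})$ is the same one-line check as in the bipartite case: for any product effect $\pi_A\otimes\pi_B\otimes\pi_C$ one has $\tr[(\text{partial transpose of }\chi)(\pi_A\otimes\pi_B\otimes\pi_C)]=\tr[\chi\,(\pi_A^{T}\otimes\pi_B\otimes\pi_C)]\ge 0$, since $\chi$ is a genuine quantum state and $\pi_A^{T}\ge 0$; the mutual distinctness of the twenty-four operators will drop out of the distinguishability analysis.

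The distinguishing measurements are the parity measurements tabulated in Table~\ref{tab2}. For a Pauli triple $(P_1,P_2,P_3)$ drawn from the list $(y,y,x),(y,x,y),(x,x,x),(x,y,y),(y,z,z),(z,z,y),(z,y,z)$, take the two-outcome measurement $\{E_{\text{odd}},E_{\text{even}}\}$ in which $E_{\text{even}}$ (resp.\ $E_{\text{odd}}$) is the rank-four projector onto the joint $(P_1,P_2,P_3)$-eigenstates with an even (resp.\ odd) number of ``spin-up'' outcomes; exactly as for $\mathcal{M}[U\otimes V]$ in Theorem~\ref{theo1}, each such measurement is a local-unitary rotation of the computational-basis parity measurement, so each effect is a sum of four product projectors and hence belongs to $\mathcal{E}_{\overline{\text{SEP}}}(\mathbb{C}^{2},\mathbb{C}^{2},\mathbb{C}^{2})$, with $E_{\text{odd}}+E_{\text{even}}=\mathbf{1}_{ABC}$. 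The core of the proof is then to verify that this family of measurements covers all $\binom{24}{2}$ pairs: for every pair $\{W,W'\}\subset\$[24]$ there is a measurement in the list under which both $W$ and $W'$ have a deterministic outcome and those outcomes differ. This is the information collected in Table~\ref{tab2} row by row, and each individual entry --- ``state $W$ yields odd/even under $(P_1,P_2,P_3)$'' --- reduces, via the stabilizer structure of the GHZ-type states together with the rule that partial transposition flips the outcome on a $\sigma_y$-measured qubit and fixes it on a $\sigma_x$- or $\sigma_z$-measured qubit, to a computation no harder than $\tr(\Phi^{+}E_{\text{odd}})=1$ in Theorem~\ref{theo1}.

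The conceptual input is small; the real work is the \emph{design} and the \emph{bookkeeping}. One must choose the GHZ-type basis, the two transposition axes, and the seven Pauli triples so that the partitions they induce on $\$[24]$ jointly separate every pair, and then check the resulting (finite but sizeable) table. I expect making this covering close up --- certifying that no pair among the twenty-four states slips past all the chosen measurements --- to be the main obstacle; by contrast the separability of the listed effects, the POPT property of the partial transposes, and the distinctness of the twenty-four operators are all routine once the set and the measurements are fixed.
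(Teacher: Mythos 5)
Your proposal is correct and takes essentially the same route as the paper's own proof: the same set $\$[24]$ of GHZ-type states together with their partial transposes, the same two-outcome separable parity measurements built from the seven Pauli triples, and the same table-based bookkeeping (Table~\ref{tab2}) certifying that every pair of states is assigned deterministic, opposite outcomes by at least one of these measurements. The only cosmetic difference---describing the second family as a partial transpose on a different single qubit rather than the paper's $\mathcal{I}\otimes\mathrm{T}\otimes\mathrm{T}$---is immaterial, since these GHZ-type states are real in the computational basis, so the two constructions yield the same operators.
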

\begin{proof}
The proof is constructive and similar to the proof of Theorem \ref{theo1}. Consider the set of following $24$ states
\begin{align*}
\$[24]&:=\left\{\chi,\overline{\chi},\overline{\overline{\chi}}\right\}\subset\Omega_{\overline{\text{SEP}}}[\mathbb{C}^{2},\mathbb{C}^{2},\mathbb{C}^{2}];\\
\overline{\chi}&:=\mathcal{I}\tens{}\mathrm{T}\tens{}\mathcal{I}(\chi);\\
\overline{\overline{\chi}}&:=\mathcal{I}\tens{}\mathrm{T}\tens{}\mathrm{T}(\chi);\\
\ket{\chi}\in\Bigg\{\ket{\Phi^{\pm}_{000}}&:=\frac{1}{\sqrt{2}}(\ket{000}\pm\ket{111}),\\
\ket{\Phi^{\pm}_{001}}&:=\frac{1}{\sqrt{2}}(\ket{001}\pm\ket{110}),\\
\ket{\Phi^{\pm}_{010}}&:=\frac{1}{\sqrt{2}}(\ket{010}\pm\ket{101}),\\
\ket{\Phi^{\pm}_{011}}&:=\frac{1}{\sqrt{2}}(\ket{011}\pm\ket{100})\Bigg\}.
\end{align*}
We aim to show that the states in $\$[24]$ are pairwise distinguishable by fully separable measurements of the form 
\begin{align*}
\mathcal{M}\equiv\begin{cases}
E_{odd}~:=~~\ket{m}\bra{m}\otimes\ket{n}\bra{n}\otimes\ket{p}\bra{p}\\~~~~~~~~~~~~+\ket{m}\bra{m}\otimes\ket{n^{\perp}}\bra{n^{\perp}}\otimes\ket{p^{\perp}}\bra{p^{\perp}}\\~~~~~~~~~~~~+\ket{m^{\perp}}\bra{m^{\perp}}\otimes\ket{n}\bra{n}\otimes\ket{p^{\perp}}\bra{p^{\perp}}\\~~~~~~~~~~~~+\ket{m^{\perp}}\bra{m^{\perp}}\otimes\ket{n^{\perp}}\bra{n^{\perp}}\otimes\ket{p}\bra{p},\\
E_{even}=\mathbin{1}-E_{odd}\\~~~~~~~~~:=~~\ket{m}\bra{m}\otimes\ket{n}\bra{n}\otimes\ket{p^{\perp}}\bra{p^{\perp}}\\~~~~~~~~~~~~+\ket{m}\bra{m}\otimes\ket{n^{\perp}}\bra{n^{\perp}}\otimes\ket{p}\bra{p}\\~~~~~~~~~~~~+\ket{m^{\perp}}\bra{m^{\perp}}\otimes\ket{n}\bra{n}\otimes\ket{p}\bra{p}\\~~~~~~~~~~~~+\ket{m^{\perp}}\bra{m^{\perp}}\otimes\ket{n^{\perp}}\bra{n^{\perp}}\otimes\ket{p^{\perp}}\bra{p^{\perp}}
\end{cases}
\end{align*}
where  $\ket{r},\ket{r^{\perp}}$ are `up' and `down' eigenstates of spin measurement ($\hat{r}\cdot\sigma$) along the $\hat{r}$ direction, for $\hat{r}\in\{\hat{m},\hat{n},\hat{p}\}$. $E_{odd}$ comprises of odd number of `up spin' and $E_{even}$ comprises of even number of `up spin'. Clearly $\mathcal{M}$ is a an allowed measurement as $E_{odd},E_{even}\in\mathcal{E}_{\overline{\text{SEP}}}(\mathbb{C}^2,\mathbb{C}^2,\mathbb{C}^2)$. The required $m,n$ and $p$ to distinguish between different pairs of states are given on the  first column of Table \ref{tab2}. For instance, the pair of states  $\left\{\Phi^+_{000},\Phi^-_{000}\right\}$ can be perfectly distinguished by choosing $(m,n,p) = (y,y,x)$.  The measurement $\mathcal{M}_{\{\Phi^+_{000},\Phi^-_{000}\}}\equiv\{E_{odd},E_{even}\}$ is given by,
\begin{align*}
E_{odd}&:=~~~\ket{y}\bra{y}\otimes\ket{y}\bra{y}\otimes\ket{x}\bra{x}\\&~~~~~+\ket{y}\bra{y}\otimes\ket{y^{\perp}}\bra{y^{\perp}}\otimes\ket{x^{\perp}}\bra{x^{\perp}}\\&~~~~~+\ket{y^{\perp}}\bra{y^{\perp}}\otimes\ket{y}\bra{y}\otimes\ket{x^{\perp}}\bra{x^{\perp}}\\&~~~~~+\ket{y^{\perp}}\bra{y^{\perp}}\otimes\ket{y^{\perp}}\bra{y^{\perp}}\otimes\ket{x}\bra{x},\\
E_{even}&=\mathbin{1}-E_{odd}\\
&:=\ket{y}\bra{y}\otimes\ket{y}\bra{y}\otimes\ket{x^{\perp}}\bra{x^{\perp}}\\&~~~~~+\ket{y}\bra{y}\otimes\ket{y^{\perp}}\bra{y^{\perp}}\otimes\ket{x}\bra{x}\\&~~~~~+\ket{y^{\perp}}\bra{y^{\perp}}\otimes\ket{y}\bra{y}\otimes\ket{x}\bra{x}\\&~~~~~+\ket{y^{\perp}}\bra{y^{\perp}}\otimes\ket{y^{\perp}}\bra{y^{\perp}}\otimes\ket{x^{\perp}}\bra{x^{\perp}}.
\end{align*}
A straightforward calculation yields
\begin{align*}
\tr(\Phi^-_{000}E_{odd})=1,~~\tr(\Phi^-_{000}E_{even})=0;\\  
\tr(\Phi^+_{000}E_{odd})=0,~~\tr(\Phi^+_{000}E_{even})=1. 
\end{align*}
Therefore, the measurement $\mathcal{M}_{\{\Phi^+_{000},\Phi^-_{000}\}}$  perfectly distinguishes the states $\Phi^-_{000}$ and $\Phi^+_{000}$. As we show in Table \ref{tab2}, any pair of states in $\$[24]$ can be distinguished perfectly by such a measurement. This completes the proof.
\end{proof}
Theorem \ref{theo3} thus establishes that the $\mathcal{P}^{[24]}_D$ game can be won with three elementary qubits if the $\overline{\text{SEP}}$ composition is considered among them, whereas if we consider quantum composition five elementary qubits are required.

{\it Discussion:} Exploring theories other than quantum mechanics helps us to compare and contrast the information processing capabilities of quantum theory with other theories and gain new insights about the origin of such capabilities. In this work, we assume sub-systems to be quantum and ask how the information processing capabilites of composite systems  change when one uses different mathematical structures to describe composition. While  Barnum, et al. in \cite{Barnum10}  have shown that in the space-like scenario, bipartite maximal tensor product structure of local quantum systems cannot generate beyond quantum correlations, the authors in \cite{Lobo21} have shown that in a generalized Bell scenario every beyond quantum state can produce beyond quantum correlations. In this work, we have used a different approach wherein timelike scenarios are considered instead of the traditional spacelike Bell scenarios. We have provided concrete results which can be experimentally verified and be used as principles to single out the quantum composition rule. While Corollary \ref{coro1} and Theorem \ref{theo2} establish that the phenomenon of dimension mismatch occurs in $\overline{SEP}$ composition, it has been shown \cite{Arai19,Naik22} that dimension mismatch occurs in $SEP$ composition as well. A natural question then is to ask what other compositions can be ruled out using dimension mismatch. Another interesting direction to explore is by relaxing the assumption of quantum sub-systems. Propositions \ref{prop2} and \ref{prop3} provide some preliminary results in the GPT framework which may be useful in this regard. Our study forms an important piece of the quantum reconstruction program in which we seek to derive quantum theory from physical principles \cite{Hardy01, Barrett07, Chiribella11}.

\begin{acknowledgements}
MA and MB acknowledge funding from the National Mission in Interdisciplinary Cyber-Physical systems from the Department of Science and Technology through the I-HUB Quantum Technology Foundation (Grant no: I-HUB/PDF/2021-22/008). MB acknowledges support through the research grant of INSPIRE Faculty fellowship from the Department of Science and Technology, Government of India and the start-up research grant from SERB, Department of Science and Technology (Grant no: SRG/2021/000267).
\end{acknowledgements}

\end{document}